\documentclass[%
 reprint,
nofootinbib,
 amsmath,amssymb,
 aps,
pre,
]{revtex4-2}

\usepackage{amsfonts,amsmath,amsthm,amssymb}

\usepackage{cleveref}
\usepackage{todonotes}

\newtheorem{theorem}{Theorem}
\newtheorem{lemma}{Lemma}

\newcommand{\map}{\textcolor{brown}}

\begin{document}


\title{A Bounded-Confidence Model of Opinion Dynamics with Adaptive Interaction Probabilities}


\author{Leila Thompsky}
\affiliation{University of California, Los Angeles, Los Angeles, California, USA}

\author{Yuexuan (Yolanda) Wu}
\affiliation{University of California, Los Angeles, Los Angeles, California, USA}

\author{Mason A. Porter}
\affiliation{University of California, Los Angeles, Los Angeles, California, USA}
\affiliation{Santa Fe Institute, Santa Fe, New Mexico, USA}

\author{Jiajie Luo}
\affiliation{University of Chicago, Chicago, Illinois, USA}

\date{\today}


\begin{abstract}
Models of opinion dynamics aim to capture how individuals' opinions change when they interact with each other. One well-known model of opinion dynamics is the Deffuant--Weisbuch (DW) model, which is a type of bounded-confidence model (BCM). In the DW model, agents have pairwise interactions, and they are receptive to other agents' opinions when 
their opinions are sufficiently close to each other. In this paper, we extend the DW model by studying it on networks with heterogeneous and adaptive edge weights between pairs of agents. These edge weights govern the 
 interaction probabilities between the agents and 
 thereby encode the idea that people are more likely to communicate with individuals with whom they have previously compromised or had other positive interactions.
  We prove theoretical guarantees of our adaptive edge-weighted DW model's convergence properties, the long-time dynamics of its edge weights, and the model's associated ``effective graph", which is a time-dependent subgraph  
 that includes edges only between agents that are receptive to each other's opinions. We support our theoretical results with numerical simulations of our adaptive edge-weighted 
 DW model on a variety of networks and find that including adaptive edge weights yields different qualitative dynamics for different types of networks.
 In particular, for small confidence bounds, we observe that incorporating adaptive edge weights decreases the convergence time for dense networks but increases the convergence time for sparse networks.
\end{abstract}

\maketitle



\section{Introduction}
\label{introduction}

Interactions allow individuals to share their views and to potentially alter their opinions when they consider the perspectives of others~\cite{social_interactions,wasserman1994}.
Communities of individuals, whether online or offline, can thus act as conduits for social influence and the spread of opinions in a population~\cite{baek2025}.

In the study of opinion dynamics, researchers examine how opinions change with time due to social interactions \cite{noorazar-review,starnini2025,caldarelli2025}. They commonly study whether or not a community achieves a consensus opinion, how opinion clusters (i.e., sets of individuals with similar opinions) form, how many opinion clusters form, and how long opinion clusters take to form or dissolve \cite{noorazar-review}. In agent-based models (ABMs) of opinion dynamics, researchers consider a set of agents that have opinions and can alter them due to interactions with other agents. The interactions between agents often depend on an underlying network structure~\cite{newman2018}, which determines which agents have social and/or communication ties with each other. 
In a network setting, the agents are the nodes of a network and the edges between pairs of nodes indicate that the two associated agents can interact with each other. In the present paper, we consider networks in the form of graphs.

When individuals adjust their opinions, they tend to be more receptive to viewpoints that are similar to their own than to viewpoints that differ significantly from theirs \cite{receptiveness, receptiveness2}. 
Bounded-confidence models (BCMs) encode
this observation by incorporating a ``confidence bound'' that restricts 
how much agents' continuous-valued opinions can differ for them to be receptive to each other \cite{bcm_review, meng2018}. 
Agents with opinions that differ {by at least} a confidence bound are not receptive to each other, so they do not influence each other when they interact.

The two best-known BCMs are the Hegselmann--Krause (HK) model \cite{HK} and the Deffuant--Weisbuch (DW) model \cite{deffuant}. The HK model has synchronous opinion updates, in which all agents potentially update their opinions at each time step. By contrast, the DW model has asynchronous opinion updates, in which one randomly (traditionally, uniformly at random) selects a pair of agents to interact at each discrete time step. The DW model also incorporates a compromise parameter that controls how much agents change their opinions when they update their opinions.

There are many generalizations of the baseline DW and HK models~\cite{starnini2025,bcm_review, meng2018}. For example, researchers have {considered heterogeneous confidence bounds \cite{het_conf}, incorporated noise into opinion updates  \cite{rand_noise}, incorporated signed edges and used negative edges {between agents that are repelled by each other's opinions}  
\cite{repulsion_Martins2010, repulsion_kann2023}, and investigated the effects of polyadic interactions between three or more agents \cite{polyadic_interactions,schawe2022}}. 
The asynchronous {opinion update} rule in the DW model allows researchers to modify how to choose agents to interact in a time step. For instance, to model heterogeneous activity levels of individuals in a community, Li and Porter \cite{grace2022} incorporated heterogeneous probabilities of selecting a particular agent for an interaction. One can also incorporate the fact individuals are more likely to interact with some members of their social circles than others, as social relationships themselves vary in intensity and activity. To model this, one can use heterogeneous edge selection --- with agents with closer opinions more likely to interact with each other --- to study the effects on opinion dynamics of algorithmic biases in social media~\cite{alg_bias}.

Researchers have also examined opinion models with various adaptive features \cite{adaptive_networks}. 
One prevalent type of network adaptivity is network structure --- both topology and edge weights --- that coevolve with opinion dynamics on networks.
Sugishita et al.~\cite{tie_decay_op} studied the DeGroot consensus model on a tie-decay network \cite{tie_decay}, in which ties (which encode social and/or communication relationships between individuals) decay between interactions and increase instantaneously when two agents interact with each other. 
Kozma and Barrat  \cite{kozma2008, kozma2008b}  
{extended} the DW model by incorporating an edge-rewiring process.
In their model, ``discordant" edges (i.e., edges between agents that differ in their opinions by more than the confidence bound) have some chance to break; a broken edge then attaches to a new agent that one selects uniformly at random. 
Kan et al.~\cite{kan_2023} generalized {Kozma and Barrat's} model by considering a homophilic rewiring process, where broken edges are more likely to attach to agents with similar opinions. Additionally, they incorporated a separate tolerance threshold (which differs in general from the confidence bound) to determine how far apart two agents' opinions can be before one treats an edge between them as discordant.
{Brede \cite{brede2019} studied a BCM on an adaptive network where agents rewire to maximize their influence.} 
{Krishnagopal and Porter \cite{krishnagopal2025} studied a BCM 
with a neighborhood-based network adaptivity, in which agents rewire preferentially to other agents whose neighbors' opinions are close to their own.}
Agostinelli et al.~\cite{agost2026} examined a BCM {on an adaptive network} with polyadic interactions to analyze the importance of group interactions in opinion dynamics. 
Researchers have also examined BCMs with adaptive parameters. Li et al.~\cite{bcm_paper} formulated extensions of the DW and HK models with adaptive confidence bounds that increase or decrease due to agents' interactions. A key motivation of their model is that people's receptiveness towards others is influenced by their past interactions. 
Going beyond BCMs, researchers have also studied other opinion models (such as voter models \cite{adaptive_voter,durrett2012,kureh2020,adaptive_voter_hypergraphs,adaptive_voter_noisy} {and the DeGroot model \cite{adaptive_DG}}) on adaptive networks with rewiring processes. 

In the present paper, we generalize the baseline DW model by incorporating time-dependent and heterogeneous edge weights between agents in a network. 
The weight of an edge determines the probability of an interaction between the two agents that are attached to that edge. As in~\cite{bcm_paper}, our adaptive edge weights reflect the idea that a tie between two agents changes as a result of their interactions. However, instead of modeling an agent's receptiveness to another agent, we model how an agent's past interactions affect its interaction patterns and frequencies.
The motivating idea for our model is that people are more likely to interact with someone with whom they have agreed previously than people with whom they have disagreed previously {(see, e.g., \cite{social_exchange})}. 
To encode this idea, we increase the edge weight between two nodes after each interaction in which they update their opinions (this is a ``positive" interaction) and we decrease the edge weight after an interaction that does not lead to opinion updates (i.e., a ``negative" interaction).

We examine the convergence properties of agents’ opinions in our adaptive DW model, and we prove theoretical results about the long-term dynamics of the edge weights. We also analyze the ``effective graphs'' of graphs $G$ that encode relationships between agents.
An effective graph is the time-dependent subgraph of $G$ that includes only the edges in $G$ that are between agents that are receptive to each other. We simulate our adaptive DW model on a variety of graphs and examine the associated major and minor opinion clusters, Shannon entropies, and convergence times. Our numerical simulations yield different behaviors for different types of graphs. 

Our paper proceeds as follows. In Section \ref{sec:baseline_model}, we present the baseline DW model. In Section \ref{sec:model_description}, we describe our DW model with adaptive edge weights. In Section \ref{sec:theoretical_results}, we describe the theoretical results for our model. We then present the results of our numerical simulations in Section \ref{sec:num_simulations} and give concluding remarks in Section \ref{sec:conclusions}. Our code {and additional plots from our numerical simulations} are available at {\url{https://github.com/Yolandayx929/adaptive-edge-BCM}}.


\section{The baseline Deffuant--Weisbuch (DW) model}
\label{sec:baseline_model}

In the baseline Deffuant--Weisbuch (DW) model \cite{deffuant}, pairs of nodes update their opinions asynchronously. 
Consider an undirected,  unweighted, and time-independent network (i.e., graph) $G = (V,E)$. 
Each node $i \in V$ has a time-dependent continuous-valued opinion $x_i(t) \in [0,1]$. At each discrete time $t$, we choose an edge $(i,j) \in E$ uniformly at random. If the opinions of nodes $i$ and $j$ differ by less than the confidence bound $c$ 
(i.e., if $|x_i(t) - x_j(t)| < c$), then nodes $i$ and $j$ have a positive interaction and update their opinions:
\begin{align}
    x_i(t + 1) &= x_i(t) + \mu(x_j(t) - x_i(t)) \nonumber \,,\\
    x_j(t + 1) &= x_j(t) + \mu(x_j(t) - x_i(t)) \,,
\end{align}
where $\mu \in (0,0.5]$ is the compromise parameter. The compromise parameter controls how much the nodes change their opinions due to a positive interaction. However, if the opinions of nodes $i$ and $j$ differ by at least the confidence bound (i.e., if $|x_i(t) - x_j(t)|\geq c$), then the opinions of $i$ and $j$ stay the same: 
\begin{align}
	x_i(t + 1) &= x_i(t)  \nonumber \,, \\
	x_j(t + 1) &= x_j(t)  \,.
\end{align}
The opinions of nodes that do not interact at time $t$ stay the same.

The confidence bound $c$ and the compromise parameter $\mu$ are time-independent and are the same for all nodes. Additionally, at each time $t$, there is an equal (and time-independent) probability of selecting each edge $(i,j) \in E$ for its attached nodes to interact.


\section{DW model with adaptive edge weights}
\label{sec:model_description}

Our adaptive DW model builds on the baseline DW model. Our model, which is an \textit{adaptive edge-weighted DW model}, incorporates heterogeneous interaction probabilities by assigning a time-dependent weight to each edge of an underlying graph $G$. This weight determines the interaction probabilities for each pair of {nodes}. 
The parameters of the opinion dynamics in our adaptive edge-weighted DW model are a confidence bound $c \in [0,1]$, a compromise parameter $\mu \in (0,0.5]$, an initial edge-weight parameter $z_0 \in [0,1]$, an edge-weight increase parameter 
$\gamma\in [0,1]$, and an edge-weight decrease parameter $\delta \in [0,1]$. The graph $G$ can yield additional parameters.

Let $G = (V,E)$ be graph. Each node $i \in V$ has a time-dependent continuous-valued opinion $x_i(t) \in [0,1]$, and each edge $(i,j) \in E$ has a time-dependent weight $z_{ij}(t) \in [0,1]$ that determines the probability that its attached nodes $i$ and $j$ interact. At each discrete time $t$, the probability that nodes $i$ and $j$ interact is 
\[
	P_{ij}(t) = \frac{z_{ij}(t)}{ \sum\limits_{(i',j') \in E} z_{i'j'}(t)} \,.
\]
Nodes update their opinions following the same rule as in the traditional DW model. If the opinions of nodes $i$ and $j$ differ by less than the confidence bound (i.e., $|x_i(t) - x_j(t)| < c$), then they update their opinions: 
  \begin{align}
    x_i(t + 1) &= x_i(t) + \mu(x_j(t) - x_i(t)) \nonumber \,,\\
    x_j(t + 1) &= x_j(t) + \mu(x_j(t) - x_i(t)) \label{eq:opinion_update_positive} \,.
  \end{align} 
However, if the opinions of nodes $i$ and $j$ differ by at least the confidence bound (i.e., $|x_i(t) - x_j(t)| \geq c$), then the opinions of $i$ and $j$ stay the same: 
  \begin{align}
    x_i(t + 1) &= x_i(t)  \nonumber \,, \\
    x_j(t + 1) &= x_j(t)  \label{eq:opinion_update_negative} \,.
  \end{align}
The opinions of nodes that do not interact at time $t$ stay the same. 

We also update the weight of the edge $(i,j)$ between interacting nodes $i$ and $j$.\footnote{Our edge-weight update rule is inspired by the confidence-bound update rule in \cite{bcm_paper}.} If $|x_i(t) - x_j(t)| < c$, then the edge weight $z_{ij}(t)$ increases:
\begin{align} \label{eq:edge_weight_increase}
	z_{ij}(t + 1) = z_{ij}(t) + \gamma(1 - z_{ij}(t)) \,. 
\end{align}
Otherwise, the edge weight $z_{ij}$ decreases:
\begin{align} \label{eq:edge_weight_decrease}
	z_{ij}(t + 1) = \delta z_{ij}(t)\,. 
\end{align}
The weights of edges that are not involved in an interaction at time $t$ stay the same. 


\section{Theoretical results} \label{sec:theoretical_results}

In this section, we state and prove several theoretical properties about our adaptive edge-weighted DW model. 
For a graph $G = (V,E)$, let $N = |V|$ denote the number of nodes and let $m = |E|$ denote the number of edges.
The \emph{limit opinion} $x^i$ of a node $i$ is $\lim\limits_{t\to\infty}x_i(t)$\map{.}
{A \emph{limit opinion cluster} is a maximal set of nodes with the same limit opinion.
That is, nodes $i$ and $j$ are in the same limit opinion cluster when $x^i = x^j$. 
}
Adjacent nodes $i$ and $j$ are \emph{receptive} to each other at time $t$ if the difference between their opinions is less than the confidence bound $c$ (i.e., $|x_i(t) - x_j(t)| < c$).

{In our proofs in this section, we use the notation $\Delta^{ij} := |x^i - x^j|$ for the absolute value of the difference between the 
{limit opinion values} of $i$ and $j$.}

The following theorem by Lorenz \cite{lorenz_stabalization_2005} guarantees that a limit opinion exists for each node of a graph. 

\begin{theorem} [{Lorenz \cite{lorenz_stabalization_2005}}] \label{theorem:stabilization} 
    Let $\{A(t)\}_{t = 0}^\infty \in \mathbb{R}_{\geq 0}^{N \times N}$ be a sequence of row-stochastic matrices. Suppose that each matrix satisfies the following properties:

    \begin{enumerate}
        \item[(1)]{The diagonal entries of $A(t)$ are positive.} 
        \item[(2)]{For each $i, j \in \{1,\dots,N\}$, we have that $[A(t)]_{ij} > 0$ if and only if $[A(t)]_{ji} > 0$.} 
        \item[(3)]{There is a constant $\alpha > 0$ such that the smallest positive entry of $A(t)$ for each $t$ is larger than $\alpha$.} 
    \end{enumerate}

    Given times $t_0$ and $t_1$ with $t_0 < t_1$, let
    \begin{equation}
	    A(t_0, t_1) = A(t_1 - 1) \times A(t_1 - 2) \times \cdots \times A(t_0) \,.
    \end{equation}
    If conditions (1)--(3) are satisfied, then there exists a time $t'$ and pairwise-disjoint classes $\mathcal{I}_1 \cup \dots \cup \mathcal{I}_p = \{1,\dots , N \}$ such that if we reindex the rows and columns of the matrices in the order $\mathcal{I}_1 \cup \dots \cup \mathcal{I}_p$, then
    \begin{equation}
        \lim\limits_{t\to\infty} A(0,t) = \begin{bmatrix} K_1  & & 0 \\ & \ddots & \\ 0 & & K_p \end{bmatrix} \,,
    \end{equation}
    where each $K_q$, with $q \in \{1,2,\ldots, p\}$, is an $|\mathcal{I}_q| \times |\mathcal{I}_q|$ row-stochastic matrix 
    whose rows are all the same. 
\end{theorem}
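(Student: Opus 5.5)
The statement is Lorenz's stabilization theorem for products of row-stochastic matrices. The idea is to control the backward products $A(0,t)$ through the forward dynamics $x(t+1)=A(t)x(t)$. We work with the \emph{communication graph} of each $A(t)$, which is undirected by condition (2), and we use the \emph{persistent} graph, whose edges appear in infinitely many $A(t)$.

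\textbf{Step 1: the eventual class partition.}
Define the persistent graph $\mathcal{G}_\infty$ as the graph on $\{1,\dots,N\}$ whose edges are the pairs $\{i,j\}$ with $[A(t)]_{ij}>0$ for infinitely many $t$. There are finitely many pairs, so there is a time $t'$ after which every positive off-diagonal entry of $A(t)$ corresponds to an edge of $\mathcal{G}_\infty$. Take $\mathcal{I}_1,\dots,\mathcal{I}_p$ to be the connected components of $\mathcal{G}_\infty$. After reindexing in the order $\mathcal{I}_1\cup\dots\cup\mathcal{I}_p$, every $A(t)$ with $t\ge t'$ is block diagonal. Therefore
\[
A(0,t)=A(t',t)\,A(0,t') ,
\]
and it suffices to show that the products of the blocks $A_q(t',t)$ converge, for each $q$, to a matrix with identical rows.

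\textbf{Step 2: contraction on one class.}
Fix a class $\mathcal{I}_q$. For $y\in\mathbb{R}^{|\mathcal{I}_q|}$, let
\[
V(y)=\max_i y_i-\min_i y_i .
\]
Since each $A(t)$ is row-stochastic, $V$ is nonincreasing along $y(t+1)=A_q(t)y(t)$. We want $V(y(t))\to 0$ for every initial $y$, because that is equivalent to the products $A_q(t',t)$ tending to rank one with identical rows.

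Use conditions (1) and (3). Every diagonal entry and every positive entry is at least $\alpha$, so information spreads along edges with weight bounded below. Break time into intervals $[s_k,s_{k+1})$ such that the union of the communication graphs on each interval is connected on $\mathcal{I}_q$. Such intervals exist because every edge of the component recurs infinitely often.

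\textbf{Step 3: the main obstacle.}
The difficulty is that the intervals $[s_k,s_{k+1})$ can have unbounded length. A standard Dobrushin or ergodicity-coefficient bound over one interval then degrades like $\alpha^{\text{length}}$, which need not sum to a divergent series. This is exactly where symmetry (condition (2)) together with the lower bound $\alpha$ must be used: they give cut-balance, or type-symmetry.

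The approach I would try first is a quadratic or order-statistic Lyapunov argument in the style of Hendrickx--Tsitsiklis and Lorenz. Sort the states and consider any cut of the nodes into those with the $k$ largest values and the rest. Each time an edge crosses the cut, the mass transferred in one direction is comparable, up to a factor depending on $\alpha$, to the mass transferred back, so neither side can persistently drain the other. From this one shows that the ordered profile converges. Then every edge of $\mathcal{G}_\infty$, recurring infinitely often with weight at least $\alpha$, forces the limits of its two endpoints to agree. Connectedness of $\mathcal{I}_q$ then gives a common limit, so $V(y(t))\to 0$.

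\textbf{Step 4: assembling the limit.}
Convergence of $A_q(t',t)y$ for all $y$ to a consensus vector means $A_q(t',t)\to \mathbf{1}\pi_q^{\top}$ for a probability vector $\pi_q$. Multiplying by the fixed matrix $A(0,t')$ preserves block diagonality, because $A(0,t')$ also maps within classes after $t'$. Strictly, one argues directly on $x(t)=A(0,t)x(0)$, which gives blocks $K_q$ that are row-stochastic with identical rows.
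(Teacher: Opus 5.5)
Your Steps 1--3 follow the standard route behind Lorenz's theorem, which the paper cites without proof. You restrict to the components of the persistent graph, on which all $A(t)$ with $t\ge t'$ are block diagonal. You then obtain consensus within each component from type-symmetry, positive diagonals, and the lower bound $\alpha$. Step 3 is a pointer to the Lorenz/Hendrickx--Tsitsiklis argument rather than a proof. For instance, convergence of the sorted profile does not by itself give convergence of each coordinate, and you use coordinate convergence before concluding that a persistent edge forces equal limits. Still, that part is the known, correct core.

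The genuine error is in Step 4. The matrix $A(0,t')=A(t'-1)\cdots A(0)$ is built from the matrices \emph{before} $t'$, and nothing forces those to respect the partition $\mathcal{I}_1,\dots,\mathcal{I}_p$. Your justification conflates $A(0,t')$ with the post-$t'$ matrices. Your own identity $A(0,t)=A(t',t)A(0,t')$ yields exactly
\[
\lim_{t\to\infty}A(0,t)=\operatorname{diag}\bigl(\mathbf{1}\pi_1^{\top},\dots,\mathbf{1}\pi_p^{\top}\bigr)\,A(0,t')\,.
\]
The rows of this matrix coincide within each class, but in general it has nonzero entries outside the diagonal blocks.

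This cannot be repaired, because the statement as printed is false. Take $N=2$, $A(t)=I$ for $t\ge 1$, and
\[
A(0)=\begin{bmatrix}2/3&1/3\\1/3&2/3\end{bmatrix}\,,
\]
so conditions (1)--(3) hold with $\alpha=1/4$. Then $A(0,t)=A(0)$ for all $t\ge 1$. This matrix is not diagonal, which rules out two singleton classes, and its rows are not identical, which rules out one class.

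The correct form of Lorenz's theorem carries the right factor $A(0,t')$, which amounts to $\lim_{t\to\infty}A(t',t)=\operatorname{diag}(K_1,\dots,K_p)$. This is why the printed statement introduces a time $t'$ that it never uses. That corrected version is precisely what your Steps 1--3 establish. Nothing downstream in the paper is affected, since the paper only uses the existence of $\lim_{t\to\infty}A(0,t)$, and hence of limit opinions.
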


The opinion updates of the baseline DW model satisfy conditions (1)--(3), so Theorem \ref{theorem:stabilization} establishes that each node's opinion converges to a limit opinion \cite{lorenz_stabalization_2005}. Our adaptive edge-weighted DW model uses the same {opinion update} rule as the baseline DW model, so the opinion of each node in this BCM also converges to a limit opinion.


\subsection{Edge-weight analysis}\label{sec:edge_analysis}

In this subsection, we state and prove theoretical results that describe the behavior of the edge weights in our adaptive edge-weight DW model. Our culminating result is Theorem \ref{theorem:edge_convergence}, which guarantees that{, almost surely (i.e., with probability $1$),}
{the edge weight between two nodes converges to 0 when the nodes are in different limit opinion clusters and converges to 1 when the nodes are in the same limit opinion cluster.}
These results {are analogous to} the confidence-bound results of \cite{bcm_paper}.

\begin{lemma}\label{lemma:infinite_interactions}
    Consider the adaptive edge-weighted DW model with opinion update rule (\ref{eq:opinion_update_positive}, \ref{eq:opinion_update_negative}) and edge-weight update rule (\ref{eq:edge_weight_increase}, \ref{eq:edge_weight_decrease}), and let $(i,j)\in E$. 
    Nodes $i$ and $j$ almost surely interact with each other infinitely many times. 
\end{lemma}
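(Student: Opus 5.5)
Fix an edge $(i,j)\in E$; the goal is a lower bound on $P_{ij}(t)$ that, although it may shrink with $t$, shrinks slowly enough that its sum over $t$ diverges. A conditional (L\'evy) Borel--Cantelli argument then gives infinitely many interactions. Throughout, assume $\delta>0$ and $z_0>0$. If $\delta=0$ or $z_0=0$, an edge weight can reach $0$ and that edge is never selected again, so the statement needs these nondegeneracy assumptions.

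\textbf{Step 1: bound the numerator.} Every edge weight stays in $[0,1]$, so $\sum_{(i',j')\in E} z_{i'j'}(t)\le m$ and hence
\[
P_{ij}(t)\ge z_{ij}(t)/m .
\]
The weight $z_{ij}$ changes only at times when $(i,j)$ itself is selected. A positive interaction does not decrease it, since $z+\gamma(1-z)\ge z$. A negative interaction multiplies it by $\delta$. So if $(i,j)$ has been selected $k$ times before time $t$, then $z_{ij}(t)\ge z_0\delta^{k}$.

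\textbf{Step 2: argue by contradiction.} Suppose that with positive probability the edge is selected only finitely often. On that event there is a random $K<\infty$ with $z_{ij}(t)\ge z_0\delta^{K}>0$ for all $t$. Therefore
\[
\sum_t P_{ij}(t)\ge \sum_t z_0\delta^K/m=\infty .
\]
Let $\mathcal F_t$ be the history of opinions and weights up to time $t$, so $P_{ij}(t)=\Pr\bigl((i,j)\text{ selected at }t\mid\mathcal F_t\bigr)$. L\'evy's extension of the second Borel--Cantelli lemma says that, almost surely, $\{(i,j)\text{ selected i.o.}\}=\{\sum_t P_{ij}(t)=\infty\}$. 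On our event the right-hand side holds, so the edge is selected infinitely often, which is a contradiction. Hence the edge is selected infinitely often almost surely.

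\textbf{Elementary alternative to L\'evy's lemma.} One can avoid the conditional Borel--Cantelli lemma by conditioning on the number $k$ of selections so far. While the count equals $k$, each step selects $(i,j)$ with probability at least $q_k:=z_0\delta^k/m>0$. So the probability of going $n$ consecutive steps without selecting it is at most $(1-q_k)^n\to0$. It follows that, almost surely, the $(k+1)$st selection happens in finite time, for every $k$. Taking the countable intersection over $k$ gives infinitely many interactions almost surely.

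The main obstacle is that $P_{ij}(t)$ is not bounded below uniformly in time: repeated negative interactions shrink $z_{ij}$ geometrically. This is why the lower bound has to be indexed by the number of past interactions on the edge, rather than by $t$, and why the argument runs through a countable intersection over $k$ instead of a single geometric waiting time.
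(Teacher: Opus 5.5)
Your proof is correct, and its core idea matches the paper's. On the event that $(i,j)$ is selected only finitely often, $z_{ij}$ freezes at a positive value, so $P_{ij}(t)\ge z_{ij}(t)/m$ stays bounded away from $0$, which is incompatible with the edge never being selected again.

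Where you differ is in how that final step is justified. The paper works with the last interaction time $t_0$ and the bound $z_{ij}(t_0)/m$, then simply asserts that the edge is selected again with probability $1$. Because $t_0$ depends on the whole future (it is not a stopping time) and the lower bound is random, that step is heuristic as written. Your two routes supply the missing rigor:
\begin{itemize}
\item L\'evy's conditional Borel--Cantelli lemma, which handles the random bound $z_0\delta^K$ directly on the bad event;
\item the selection times $\tau_k$, with deterministic per-stage bounds $q_k=z_0\delta^k/m$ and a countable intersection over $k$.
\end{itemize}
The paper's version buys brevity; yours is the version that actually withstands scrutiny.

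You are also right that $\delta>0$ and $z_0>0$ must be assumed. The paper's proof uses $z_{ij}(t_0)>0$ without comment, although its stated ranges $\delta,z_0\in[0,1]$ include two degenerate cases. For $\delta=0$, a single negative interaction sets $z_{ij}=0$ permanently, so the lemma genuinely fails. For $z_0=0$, all weights vanish initially and $P_{ij}(0)=0/0$ is undefined.

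One minor repair is needed in the L\'evy route. Generate $\mathcal F_t$ by the initial opinions and the edges selected before time $t$, not by opinions and weights alone. A selection can leave every opinion and weight unchanged (e.g., a negative interaction when $\delta=1$), and L\'evy's lemma needs the event ``$(i,j)$ is selected at time $t$'' to be $\mathcal F_{t+1}$-measurable.
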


\begin{proof}
    Suppose that nodes $i$ and $j$ interact with each other only a finite number of times. There is then a time $T$ such that nodes $i$ and $j$ do not interact with each other at any time $t\geq T$.  

    Let $t_0 \leq T$ denote the last time that nodes $i$ and $j$ interact, with $t_0 = 0$ if they never interact. Because nodes $i$ and $j$ do not interact at any time $t\geq T$, the weight of the edge between nodes $i$ and $j$ is
    $z_{ij}(t) = z_{ij}(t_0)$ for all $t\geq t_0$. Additionally, for each edge $(i',j')$, it is necessarily true that $z_{i'j'}(t)\leq 1$. 

    Let $P_{ij}$ denote the probability that we select edge $(i,j)$ at time $t$. 
    By the definition of $P_{ij}$, we have 
    \begin{align*}
        P_{ij} &= \frac{z_{ij}(t)}{ \sum\limits_{(i',j') \in E} z_{i'j'}(t)} \\
       		 &= \frac{z_{ij}(t_0)}{ \sum\limits_{(i',j') \in E} z_{i'j'}(t)} \\
       		 &\geq \frac{z_{ij}(t_0)}{{m}}\\ & > 0\,,
    \end{align*}
    where we recall that $m = |E|$.
    
    Therefore, with probability $1$, we select edge $(i,j)$ at some time $t \geq T$. Consequently, with probability $0$, nodes $i$ and $j$ do not interact with each other for any time $t\geq T$. 
    It then follows that nodes $i$ and $j$ almost surely interact with each other infinitely often.
\end{proof}

We now demonstrate the monotonicity\footnote{We say that a sequence $\{a_n\}$ is ``monotone increasing" if $a_n \leq a_{n + 1}$ for all $n$ and that it is ``monotone decreasing" if $a_n \geq a_{n + 1}$ for all $n$.} of the edge weights of our adaptive edge-weighted DW model.

\begin{lemma}\label{lemma:adaptive_DW_influence_T}
    {The} adaptive edge-weighted DW model with opinion update rule (\ref{eq:opinion_update_positive}, \ref{eq:opinion_update_negative}) and edge-weight update rule (\ref{eq:edge_weight_increase}, \ref{eq:edge_weight_decrease})
{has} a time $T$ such that $z_{ij}(t)$ is either monotone increasing or monotone decreasing for all edges $(i,j)\in E$ for all times $t\geq T$.  
That is, for every edge $(i,j)$ and for all times $t\geq T$, precisely one of the following statements holds: 
    \begin{align}
        z_{ij}(t + 1)&\geq z_{ij}(t)\,; \label{eq:monotone_increase} \\
        z_{ij}(t + 1)&\leq z_{ij}(t)\,.\label{eq:monotone_decrease}
    \end{align}  
    In particular, for $(i,j)\in E$, the inequality \eqref{eq:monotone_increase} holds if $x^i = x^j$ and the inequality \eqref{eq:monotone_decrease} holds if $x^i \neq x^j$. 
    Moreover, if $x^i = x^j$, the edge weight $z_{ij}$ increases whenever $i$ and $j$ interact with each other; if $x^i \neq x^j$, the edge weight $z_{ij}$ decreases whenever $i$ and $j$ interact with each other. 
\end{lemma}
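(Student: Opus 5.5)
The plan is to combine Theorem~\ref{theorem:stabilization}, which guarantees that every limit opinion $x^i$ exists, with Lemma~\ref{lemma:infinite_interactions}. The aim is to show that, for each edge $(i,j)$ separately, there is a time $T_{ij}$ after which every interaction of $i$ and $j$ has the same type: always positive if $x^i = x^j$, and always negative if $x^i \neq x^j$. Because $E$ is finite, we then set $T = \max_{(i,j)\in E} T_{ij}$.

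Monotonicity then follows from the update rules. Between interactions $z_{ij}$ is constant. A positive interaction gives $z_{ij}(t+1) = z_{ij}(t) + \gamma(1-z_{ij}(t)) \geq z_{ij}(t)$, since $z_{ij}\le 1$. A negative interaction gives $z_{ij}(t+1) = \delta z_{ij}(t) \leq z_{ij}(t)$. So \eqref{eq:monotone_increase} or \eqref{eq:monotone_decrease} holds for all $t \ge T$, as appropriate.

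\emph{Case $x^i = x^j$.} Since $x_i(t)-x_j(t) \to 0$, there is a time $T_{ij}$ with $|x_i(t)-x_j(t)| < c$ for all $t\ge T_{ij}$. Hence every interaction after $T_{ij}$ is positive and $z_{ij}$ increases (weakly) each time. This step needs $c>0$. In the degenerate case $c=0$ no pair is ever receptive, all opinions are frozen, and every weight is non-increasing. I would state the lemma for $c>0$, or treat $c=0$ separately with a remark.

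\emph{Case $x^i \neq x^j$.} This is the main step. I first show that $i$ and $j$ can have only finitely many \emph{positive} interactions. Suppose instead that they have infinitely many, at times $t_1<t_2<\cdots$. At each such time, $x_i(t_k+1)-x_i(t_k) = \mu\,(x_j(t_k)-x_i(t_k))$. The right-hand side tends to $\mu(x^j-x^i) \neq 0$, so $|x_i(t_k+1)-x_i(t_k)| \ge \mu\Delta^{ij}/2 > 0$ for all large $k$. This contradicts convergence of $x_i(t)$, because a convergent sequence is Cauchy and its consecutive differences tend to $0$. Therefore there is a last positive interaction, and I let $T_{ij}$ be the time just after it (or $0$ if there is none). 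Every interaction after $T_{ij}$ is negative, so $z_{ij}$ is multiplied by $\delta$ at each interaction and is otherwise constant.

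I expect this case to be the main obstacle, because of the boundary situation $\Delta^{ij} = c$. There the opinion difference could approach $c$ from below, so one cannot simply argue that $|x_i(t)-x_j(t)| \ge c$ eventually. The jump argument above avoids that issue entirely: it never compares $\Delta^{ij}$ with $c$. As a by-product it gives $\Delta^{ij}\ge c$ for adjacent nodes with different limit opinions. By Lemma~\ref{lemma:infinite_interactions}, interactions along each edge do happen infinitely often almost surely, so the statements ``increases whenever they interact'' and ``decreases whenever they interact'' are not vacuous. Moreover, in the second case the infinitely many remaining interactions are all negative.
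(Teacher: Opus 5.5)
Your argument is correct and is essentially the paper's: for $x^i\neq x^j$ the paper also shows that a positive interaction late in time would move $x_j$ by $\mu|x_i(t)-x_j(t)|$, which is bounded below near the limits yet must be small by the Cauchy property, and for $x^i=x^j$ it uses $|x_k(t)-x^k|<c/2$ in the same way you use $x_i(t)-x_j(t)\to 0$. The paper fixes one explicit uniform $T$ using $\tfrac14\min_{x^m\neq x^n}|x^m-x^n|$ and $\tfrac{\mu}{4}\min_{x^m\neq x^n}|x^m-x^n|$, whereas you argue by contradiction along a sequence of interaction times. Your $c=0$ caveat is legitimate, since the paper's condition $|x_k(t)-x^k|<c/2$ also tacitly needs $c>0$; your side remark $\Delta^{ij}\ge c$, however, holds only almost surely, because it needs Lemma~\ref{lemma:infinite_interactions} as well as the jump argument.
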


\begin{proof}
    To prove Lemma \ref{lemma:adaptive_DW_influence_T}, we separately consider situations with nodes $i$ and $j$ in different limit opinion clusters and situations with nodes $i$ and $j$ in the same limit opinion cluster. 
    
    Suppose that $i$ and $j$ are in different limit opinion clusters (i.e., $x^i \neq x^j$). 
        Consider a time $T$ such that inequalities
    \begin{align}
        |x_k (t) - x^k| &< \frac{1}{4} \min_{x^m \neq x^n} |x^m - x^n|\,; \label{eq:monotone_condition1}\\
        |x_k (t) - x_k(t')| &< \frac{\mu}{4} \min_{x^m \neq x^n} |x^m - x^n| \,; \label{eq:monotone_condition2}\\
        |x_k (t) - x^k| &< \frac{c}{2} \label{eq:monotone_condition3}
    \end{align}
    hold for all nodes $k$ and for all times $t' > t > T$. 

    Note that 
    \begin{align}
    	{\Delta^{ij} = }|x^i - x^j | \geq \min_{x^m \neq x^n} |x^m - x^n|\,.
    \label{eq:min_opinion_diff}
    \end{align}
    By \eqref{eq:monotone_condition1} and the triangle inequality,
    \begin{align*}
   	 {\Delta^{ij} = } |x^i - x^j| &= |x^i - x_i (t) + x_i (t) - x_j (t) + x_j (t) - x^j| \\
   			 &\leq |x^i - x_i(t)| + |x_i (t) - x_j (t)| + |x_j (t) - x^j| \\
   			 &< \frac{1}{2} \min_{x^m \neq x^n} |x^m - x^n| + |x_i (t) - x_j (t)|
    \end{align*}
    for all times $t\geq T$. 
    By rearranging terms and using \eqref{eq:min_opinion_diff}, we see that 
    \begin{align*}
   	 \min_{x^m \neq x^n}|x^m - x^n| &\leq {\Delta^{ij}} \\
   							 &< \frac{1}{2} \min_{x^m \neq x^n} |x^m - x^n| + |x_i (t) - x_j (t)| \,,
    \end{align*}
    which implies that
    \begin{equation} \label{eq:monotone_inequality1}
   	 \frac{1}{2} \min_{x^m \neq x^n}|x^m - x^n| < |x_i (t) - x_j (t)|\,.
    \end{equation}
    
    Suppose that $z_{ij}(t)$ increases at a time $t \geq T$ (i.e, $z_{ij}(t + 1) > z_{ij}(t))$).
  That is, nodes $i$ and $j$ are receptive at time $t$ and interact at that time. 
    Therefore,
    \[
	    x_j (t + 1) = x_j (t) + \mu(x_i(t) - x_j(t)) \,,
    \]
    which implies that 
    \begin{equation}     \label{eq:opinion_update_difference}
 	   |x_j(t + 1) - x_j (t)| = \mu|x_i(t) - x_j(t)|\,.
    \end{equation}
        By \eqref{eq:monotone_condition2}, we also have 
    \begin{align}     \label{eq:adjacent_opinion_difference}
    	|x_j(t + 1) - x_j (t)| &< \frac{\mu}{4} \min_{x^m \neq x^n} |x^m - x^n| \,.
    \end{align}
    Combining \eqref{eq:opinion_update_difference} and \eqref{eq:adjacent_opinion_difference} yields 
    \begin{equation} \label{eq:max_opinion_difference}
    	|x_i(t) - x_j(t)| < \frac{1}{4}\min_{x^m \neq x^n} |x^m - x^n| \,.
    \end{equation}

    The inequalities \eqref{eq:monotone_inequality1} and \eqref{eq:max_opinion_difference} cannot hold simultaneously, so $z_{ij}(t)$ cannot increase for any time $t \geq T$. 
    Therefore, $z_{ij}$ is monotone decreasing for $t \geq T$. Because $z_{ij}(t)$ cannot increase, nodes $i$ and $j$ are mutually unreceptive when they interact with each other. (That is, $|x_i(t) - x_j(t)|\geq c$ for any time $t \geq T$ at which $i$ and $j$ interact.)
    When $i$ and $j$ interact with each other, we update $z_{ij}$ using \eqref{eq:edge_weight_decrease}, which implies that $z_{ij}$ decreases.
    
    Now consider adjacent nodes $i$ and $j$ in the same limit opinion cluster (i.e., $x^i = x^j$). 
    Using the inequality \eqref{eq:monotone_condition3}, we have for all times $t \geq T$ that
    \begin{align}
        | x_i (t) - x_j (t) | &= | x_i (t) - x^i + x^j - x_j (t)| \notag\\ 
        				&\leq | x_i (t) - x^i | + | x^j - x_j (t) | \notag \\
        				&< \frac{c}{2} + \frac{c}{2} \notag \\
       				 &= c\,. \label{eq:same_limit_opinion}
    \end{align}

    Suppose that $i$ and $j$ interact with each other at time $t\geq T$.
    Because $|x_i (t) - x_j (t) | < c$ for all times $t\geq T$, 
    nodes $i$ and $j$ are mutually receptive and the edge weight $z_{ij}$ updates, so $z_{ij}(t + 1) = z_{ij}(t) + \gamma (1 - z_{ij}(t)) > z_{ij}(t)$. 
    If $i$ and $j$ do not interact with each other at a time $t\geq T$, then $z_{ij}(t + 1) = z_{ij}(t)$. 
    Therefore, $z_{ij}$ is monotonically increasing for times $t\geq T$ and increases when $i$ and $j$ interact with each other. 
\end{proof}

\begin{lemma} \label{lemma:edge_weight_limit}
    Consider the adaptive edge-weighted DW model with opinion update rule (\ref{eq:opinion_update_positive}, \ref{eq:opinion_update_negative}) and edge-weight update rule (\ref{eq:edge_weight_increase}, \ref{eq:edge_weight_decrease}).
Suppose that the edge weight $z_{ij}(t)$ converges to some limit for each edge $(i,j)\in E$. For all edges $(i,j)\in E$, we then almost surely have $z_{ij}(t)\to 0$ or $z_{ij}(t)\to 1$ as $t\to\infty$.
\end{lemma}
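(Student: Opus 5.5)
Fix an edge $(i,j)\in E$ and let $z^{ij}:=\lim_{t\to\infty} z_{ij}(t)$, which exists by hypothesis. The plan is to use Lemma~\ref{lemma:infinite_interactions} to get infinitely many interaction times, and Lemma~\ref{lemma:adaptive_DW_influence_T} to say which update rule is applied at all but finitely many of them. Then I pass to the limit in that update rule along the interaction times.

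First I discard a null event. By Lemma~\ref{lemma:infinite_interactions}, almost surely the set of times at which $i$ and $j$ interact is infinite; call these times $t_1<t_2<\cdots$. On this event, let $T$ be the time from Lemma~\ref{lemma:adaptive_DW_influence_T}. I then split into two cases according to whether $x^i=x^j$ or $x^i\neq x^j$.

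\emph{Case $x^i=x^j$.} By Lemma~\ref{lemma:adaptive_DW_influence_T}, for every interaction time $t_k\geq T$ the nodes are receptive and \eqref{eq:edge_weight_increase} applies:
\[
z_{ij}(t_k+1)=z_{ij}(t_k)+\gamma\bigl(1-z_{ij}(t_k)\bigr).
\]
Both $z_{ij}(t_k)$ and $z_{ij}(t_k+1)$ are subsequences of a convergent sequence. Letting $k\to\infty$ gives $z^{ij}=z^{ij}+\gamma(1-z^{ij})$, so $\gamma(1-z^{ij})=0$, and hence $z^{ij}=1$ provided $\gamma>0$.

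\emph{Case $x^i\neq x^j$.} Analogously, every interaction at a time $t_k\geq T$ uses \eqref{eq:edge_weight_decrease}, so $z_{ij}(t_k+1)=\delta z_{ij}(t_k)$. Taking limits yields $z^{ij}=\delta z^{ij}$, so $(1-\delta)z^{ij}=0$, and hence $z^{ij}=0$ provided $\delta<1$.

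The step I expect to need care is the degenerate parameter values. If $\gamma=0$ or $\delta=1$, the corresponding update does not move the weight, and the dichotomy fails for deterministic reasons (for example, $\delta=1$ keeps $z_{ij}=z_0$ forever). I would handle this by stating explicitly that the lemma is meant for $\gamma\in(0,1]$ and $\delta\in[0,1)$, and note that otherwise the conclusion is false. A second subtlety is that Lemma~\ref{lemma:infinite_interactions} requires $z_{ij}(t)>0$ along the way for its positive-probability argument. This can fail if $z_0=0$ or $\delta=0$. In that situation, however, $z_{ij}$ is eventually $0$: once it equals $0$ the edge is never selected again, so the weight stays at $0$. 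The conclusion $z_{ij}\to 0$ then holds trivially, and I would treat this as a preliminary sub-case before invoking Lemma~\ref{lemma:infinite_interactions}.

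Finally, $E$ is finite, so intersecting the almost-sure events over all edges preserves probability $1$, and the conclusion holds simultaneously for all $(i,j)\in E$.
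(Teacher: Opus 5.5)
Your argument is correct, but it takes a different route from the paper's. The paper's proof does not use Lemma~\ref{lemma:adaptive_DW_influence_T}.

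\textbf{The paper's route.} For a given $\epsilon>0$, it picks $T$ so that $|z_{ij}(t)-z^{ij}|<\epsilon/2$ and $|z_{ij}(t+1)-z_{ij}(t)|<\min\{1-\delta,\gamma\}\,\epsilon/2$ for all $t\geq T$. It then looks at a single interaction of $i$ and $j$ at some time $t\geq T$, which exists almost surely by Lemma~\ref{lemma:infinite_interactions}. It never needs to know which kind of interaction this is. The increment is either $\gamma(1-z_{ij}(t))$ or $(\delta-1)z_{ij}(t)$, which forces $|1-z^{ij}|<\epsilon$ or $z^{ij}<\epsilon$, respectively. Because $\epsilon$ is arbitrary, $z^{ij}\in\{0,1\}$. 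That argument uses only the convergence of $z_{ij}$ and the infinitude of interactions, so it never touches limit opinions or Lorenz's theorem.

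\textbf{Your route.} You rely on the heavier Lemma~\ref{lemma:adaptive_DW_influence_T} to fix which update rule acts at every late interaction. In exchange you get more than the lemma asks for. You identify the limit as $1$ when $x^i=x^j$ and $0$ when $x^i\neq x^j$, which is already the content of Theorem~\ref{theorem:edge_convergence}. Moreover, Lemma~\ref{lemma:adaptive_DW_influence_T} also gives eventual monotonicity, so the convergence hypothesis becomes superfluous in your version.

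If you only want the lemma as stated, you can drop Lemma~\ref{lemma:adaptive_DW_influence_T}. Among infinitely many interaction times, either infinitely many are receptive or infinitely many are unreceptive. Passing to the limit along that subsequence gives $z^{ij}=1$ or $z^{ij}=0$, respectively; this is the paper's $\epsilon$-argument in subsequence form.

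\textbf{Your caveats.} Your remarks about degenerate parameters are correct, and the paper glosses over this point. Its choice of $T$ is impossible when $\min\{1-\delta,\gamma\}=0$. The statement actually fails, for example, for $(\gamma,\delta)=(0,1)$ with $z_0\in(0,1)$, which the paper's parameter ranges allow: this is the baseline DW model, where $z_{ij}\equiv z_0$. Likewise, Lemma~\ref{lemma:infinite_interactions} tacitly needs $z_{ij}$ to stay positive, so your preliminary treatment of the case $\delta=0$ is a sound addition.
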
 

\begin{proof}
    Suppose that $z_{ij}(t)$ converges to some limit $z^{ij}$. For $\epsilon > 0$, we choose a time $T$ such that
    \begin{align} 
  	  |z_{ij}(t) - z^{ij}| &< \frac{\epsilon}{2} \,,
    \label{eq:limit_condition1} \\
  	  | z_{ij}(t + 1) - z_{ij}(t) | &< \min\{1 - \delta, \gamma\} \frac{\epsilon}{2} 
    \label{eq:limit_condition2} 
    \end{align}
    for all times $t \geq T$. 

    Suppose that nodes $i$ and $j$ interact with each other at some time $t \geq T$. Either $| x_i(t) - x_j (t)| < c$ (i.e., $i$ and $j$ are mutually receptive) or $|x_i(t) - x_j (t) | \geq c$ (i.e., $i$ and $j$ are mutually unreceptive).

    If $| x_i(t) - x_j (t) | < c$, then $z_{ij} (t + 1) = z_{ij}(t) + \gamma (1 - z_{ij} (t))$. Therefore, $z_{ij}(t + 1) - z_{ij} (t) = \gamma( 1 - z_{ij} (t))$. 
    Using the inequality \eqref{eq:limit_condition2}, we obtain 
   \[
	    	\gamma | 1 - z_{ij} (t) | < \min\{1 - \delta, \gamma\} \frac{\epsilon}{2}\,,
\] 
which implies that  
\begin{equation} \label{eq:limit_convergence1}
	 | 1 - z_{ij}(t)| < \frac{\epsilon}{2} \,. 
\end{equation}
    Using \eqref{eq:limit_condition1} and \eqref{eq:limit_convergence1}, we obtain $0 \leq | 1 - z^{ij} (t) | \leq | 1 - z_{ij}| + | z_{ij} - z^{ij} | < \frac{\epsilon}{2} + \frac{\epsilon}{2}  = \epsilon$. Therefore, $z^{ij} = 1$. 

    If $| x_i(t) - x_j (t) | \geq c$, then $z_{ij}(t + 1) = \delta z_{ij}(t)$, which 
    implies that $z_{ij}(t + 1) - z_{ij}(t) =  (\delta - 1)z_{ij}(t)$. Using the inequality \eqref{eq:limit_condition2}, we obtain 
    \[
    	(1 - \delta)| z_{ij}(t)| < \min\{1 - \delta, \gamma\}\frac{\epsilon}{2}\,, 
\] 
which in turn implies that 
    \begin{equation} \label{eq:limit_convergence2}
    	 |z_{ij}(t)| < \frac{\epsilon}{2}  \,. 
    \end{equation} 
       Using \eqref{eq:limit_condition1} and \eqref{eq:limit_convergence2}, we obtain $0 \leq z^{ij} \leq | z^{ij} - z_{ij}(t) | + | z_{ij}(t)| < \frac{\epsilon}{2} + \frac{\epsilon}{2} = \epsilon$. Therefore, $z^{ij} = 0$.

    Nodes $i$ and $j$ almost surely interact with each other infinitely often, so we almost surely have either $z^{ij} = 0$ or $z^{ij} = 1$.
\end{proof}

\begin{theorem} \label{theorem:edge_convergence}
      Consider the adaptive edge-weighted DW model with opinion update rule (\ref{eq:opinion_update_positive}, \ref{eq:opinion_update_negative}) and edge-weight update rule (\ref{eq:edge_weight_increase}, \ref{eq:edge_weight_decrease}).
      Each  edge weight $z_{ij}(t)$ almost surely converges to $0$ or $1$. Additionally, when $z_{ij}(t)$ converges to $0$ or $1$ for each edge $(i,j)\in E$, the following statements hold:  
      \begin{itemize}
          \item{If $\lim\limits_{t\to\infty}x_i(t) = \lim\limits_{t\to\infty}x_j(t)$, then almost surely $z_{ij}(t)$ converges to $1$.}
          \item{If $\lim\limits_{t\to\infty}x_i(t) \neq \lim\limits_{t\to\infty}x_j(t)$, then almost surely $z_{ij}(t)$ converges to $0$.}
      \end{itemize}
\end{theorem}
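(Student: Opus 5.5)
The proof combines the three preceding lemmas. Throughout, we use Theorem \ref{theorem:stabilization} (via Lorenz's result) to guarantee that every limit opinion $x^i$ exists. We then work on the almost-sure event where every edge is selected infinitely often. By Lemma \ref{lemma:infinite_interactions} and a union bound over the finitely many $m$ edges, this event has probability $1$.

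\textbf{Step 1: convergence.} By Lemma \ref{lemma:adaptive_DW_influence_T}, there is a time $T$ after which each $z_{ij}(t)$ is monotone: increasing if $x^i = x^j$ and decreasing if $x^i \neq x^j$. Every weight stays in $[0,1]$, since both update rules preserve $[0,1]$ when $\gamma,\delta\in[0,1]$. So each tail sequence $\{z_{ij}(t)\}_{t\ge T}$ is monotone and bounded, and the monotone convergence theorem gives a limit $z^{ij}$. The hypothesis of Lemma \ref{lemma:edge_weight_limit} is then satisfied, so almost surely $z^{ij}\in\{0,1\}$ for every edge.

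\textbf{Step 2: identifying the limit.} If $x^i = x^j$, Lemma \ref{lemma:adaptive_DW_influence_T} says that every interaction of $i$ and $j$ after $T$ is receptive and applies \eqref{eq:edge_weight_increase}. On our event, there are infinitely many such interactions at times $t_1<t_2<\cdots$. Between them, $1-z_{ij}$ is constant, and at each interaction it is multiplied by $1-\gamma$. Hence $1-z_{ij}(t_k+1) = (1-\gamma)^k(1-z_{ij}(T))\to 0$ when $\gamma>0$, so $z^{ij}=1$. If $x^i\neq x^j$, every interaction after $T$ is unreceptive, so $z_{ij}(t_k+1)=\delta^k z_{ij}(T)\to 0$ when $\delta<1$, and therefore $z^{ij}=0$. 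This direct computation even makes Step 1's appeal to Lemma \ref{lemma:edge_weight_limit} unnecessary for the dichotomy, but citing both is consistent with the paper's structure.

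\textbf{Main obstacle.} The delicate point is the degenerate parameter values. If $\gamma=0$ or $\delta=1$, the weights can freeze at values strictly between $0$ and $1$. The same issue also affects Lemma \ref{lemma:edge_weight_limit}, whose proof divides by $\min\{1-\delta,\gamma\}$. If $z_0=0$, Lemma \ref{lemma:infinite_interactions} fails because $P_{ij}$ is undefined or $0$. I would therefore state explicitly the implicit standing assumptions $\gamma>0$, $\delta<1$ and $z_0>0$, under which $P_{ij}>0$ always holds. A further subtlety is that the time $T$ in Lemma \ref{lemma:adaptive_DW_influence_T} is random, since it depends on the realized trajectory. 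The argument is nevertheless pathwise on the probability-one event, so no independence between $T$ and future edge selections is needed.
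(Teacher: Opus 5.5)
Your proof is correct under the nondegenerate parameters you name, and Step~1 is exactly the paper's argument: monotonicity from Lemma~\ref{lemma:adaptive_DW_influence_T}, then boundedness, then the $\{0,1\}$ dichotomy of Lemma~\ref{lemma:edge_weight_limit}. You differ from the paper in how you identify the limit.

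\begin{itemize}
\item \textbf{Paper's route.} It combines the dichotomy with the direction of monotonicity. By Lemma~\ref{lemma:infinite_interactions}, $z_{ij}$ moves strictly after $T_{ij}$. So a monotone increasing weight cannot converge to $0$, and by the dichotomy its limit is $1$. Symmetrically, a decreasing weight has limit $0$.
\item \textbf{Your route.} You compute the tail exactly: $1-z_{ij}(t_k+1)=(1-\gamma)^k(1-z_{ij}(T))$ or $z_{ij}(t_k+1)=\delta^k z_{ij}(T)$. This proves convergence and fixes the limit at once.
\end{itemize}

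Your route makes Lemma~\ref{lemma:edge_weight_limit} unnecessary for this theorem. It shows exactly where $\gamma>0$ and $\delta<1$ are used; the paper uses them only implicitly, when it divides by $\min\{1-\delta,\gamma\}$. It also gives a geometric rate in the number of interactions. The paper's route instead rests on a general dichotomy lemma that does not need monotonicity.

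One correction to your ``main obstacle'' paragraph: $\gamma>0$, $\delta<1$, and $z_0>0$ do not ensure that $P_{ij}>0$ for all time. If $\delta=0$, a single unreceptive interaction sets $z_{ij}$ to exactly $0$, and the edge is never selected again. The lower bound $z_{ij}(t_0)/m$ in Lemma~\ref{lemma:infinite_interactions} is then $0$, and your interaction times $t_1<t_2<\cdots$ need not exist.

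The theorem genuinely fails in this case. Take a triangle with $c=0.2$ and initial opinions $0$, $0.1$, and $0.2$.
\begin{itemize}
\item With probability $1/3$, edge $(1,3)$ is selected first. It is unreceptive, so $z_{13}$ is frozen at $0$.
\item The other two edges stay receptive forever, because after one interaction the opinion range drops below $0.2$.
\item So almost surely those two edges drive all three nodes to a common limit opinion, giving $x^1=x^3$ but $z_{13}\to 0$.
\end{itemize}
With $\delta=0$, the total weight can even become $0$, which leaves $P_{ij}$ undefined. The standing assumptions should therefore be $\gamma>0$, $0<\delta<1$, and $z_0>0$, which keep every weight strictly positive. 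With that amendment, your argument is complete.
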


\begin{proof} 

    We first consider the case $x^i = x^j$, and we then consider the case $x^i \neq x^j$

    Suppose that $x^i = x^j$. By Lemma \ref{lemma:adaptive_DW_influence_T}, there is a time $T_{ij}$ such that the edge weight $z_{ij}(t)$ is monotone increasing for all times $t\geq T_{ij}$ and strictly increases whenever nodes $i$ and $j$ interact with 
    each other at times $t' \geq T_{ij}$. Because $z_{ij}(t)$ is monotone increasing and takes values in the compact set $[0,1]$, it follows that $z_{ij}(t)$ converges.  By Lemma \ref{lemma:edge_weight_limit}, $z_{ij}$ almost surely converges either to $0$ or to $1$. 
    Now suppose that $z_{ij}$ converges to $0$ or $1$. By Lemma \ref{lemma:infinite_interactions}, nodes $i$ and $j$ almost surely interact infinitely often with each other, which implies that almost surely $z_{ij}$ strictly increases at infinitely many times.  
    Because the edge weight $z_{ij}(t)$ is monotone increasing and almost surely strictly increases infinitely often for times $t\geq T_{ij}$, it is guaranteed that $z_{ij}$ almost surely converges to $1$.

    Now suppose that $x^i\neq x^j$. By Lemma \ref{lemma:adaptive_DW_influence_T}, there is a time $T_{ij}$ such that the edge weight $z_{ij}(t)$ is monotone decreasing for all times $t\geq T_{ij}$ and strictly decreases whenever nodes 
    $i$ and $j$ interact with each other at times $t' \geq T_{ij}$. 
   Because $z_{ij}(t)$ is monotone decreasing and takes values in the compact set $[0,1]$, it follows that $z_{ij}(t)$ converges.  By Lemma \ref{lemma:edge_weight_limit}, $z_{ij}$ almost surely converges either to $0$ or to $1$. Now suppose that $z_{ij}$ converges to $0$ or $1$. By Lemma \ref{lemma:infinite_interactions}, nodes $i$ and $j$ almost surely interact with each other infinitely often, which implies that almost surely $z_{ij}$  strictly decreases at infinitely many times.  
    Because the edge weight $z_{ij}(t)$ is monotone decreasing and almost surely strictly decreases infinitely often for $t\geq T_{ij}$, it is guaranteed that $z_{ij}$ almost surely converges to $0$. 
\end{proof}


\subsection{Effective-graph analysis}

For a graph $G$ in the adaptive edge-weighted DW model with opinion update rule (\ref{eq:opinion_update_positive}, \ref{eq:opinion_update_negative}) and edge-weight update rule 
(\ref{eq:edge_weight_increase}, \ref{eq:edge_weight_decrease}),
the \emph{effective graph} $G_\mathrm{eff}(t)$ is the time-dependent subgraph of $G$ that includes only the edges of $G$ between nodes that are receptive to each other (i.e., between nodes with opinions that differ by less than the confidence bound). That is, 
\begin{align*}
    G_\mathrm{eff}(t) = (V,E_\mathrm{eff}(t))\,, 
\end{align*}
where $E_\mathrm{eff}(t) = \{(i,j) \in E \text{ such that } |x_i(t) - x_j(t)| < c\}$.
If $G_\mathrm{eff}(t)$ is eventually constant with respect to time (i.e., there is a time $T$ such that $G_\mathrm{eff}(t) = G_\mathrm{eff}(T)$ for all times $t \geq T$), then $G_\mathrm{eff}(T)$ is {the} \emph{limit effective graph}.

{We now prove Theorem \ref{thm:effgraph}, which (1) guarantees that a limit effective graph exists whenever limit effective opinions do not differ by exactly the confidence bound $c$ and (2) specifies two important properties of limit effective graphs. 
In practice, it is very unlikely for two limit opinions to differ by exactly $c$.} Accordingly, limit effective graphs always exist except for this pathological situation.

\begin{theorem}\label{thm:effgraph}
Consider the adaptive edge-weighted DW model with opinion update rule (\ref{eq:opinion_update_positive}, \ref{eq:opinion_update_negative}) and edge-weight update rule (\ref{eq:edge_weight_increase}, \ref{eq:edge_weight_decrease}). 
Suppose that no two limit opinions $x^i$ and $x^j$ differ by exactly $c$ (i.e., {$\Delta^{ij} := |x^i - x^j| \neq c$} for all nodes $i$ and $j$).  
In a simulation of the model, it is then the case that the effective graph $G_\mathrm{eff}(t)$ is eventually constant with respect to time (i.e., there is a limit effective graph). 
Now suppose that {the} limit effective graph exists for a simulation.
If adjacent nodes $i$ and $j$ have the same limit opinion (i.e., $x^i = x^j$), then the edge $(i,j)$ is in the limit effective graph. 
Additionally, if the edge $(i,j)$ is in the limit effective graph, then almost surely $x^i = x^j$.
\end{theorem}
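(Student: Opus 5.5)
The proof has three parts, each relying on convergence of opinions (Theorem~\ref{theorem:stabilization}) and, for the last, on the almost-sure interaction result (Lemma~\ref{lemma:infinite_interactions}).

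\emph{Existence of the limit effective graph.} Since $G$ has finitely many edges, it suffices to show that for each edge $(i,j)\in E$ the indicator of $|x_i(t)-x_j(t)|<c$ is eventually constant. Fix such an edge and set $\eta := |\Delta^{ij}-c|>0$, which is positive by hypothesis. Because $x_i(t)\to x^i$ and $x_j(t)\to x^j$, there is a time $T_{ij}$ with $|x_k(t)-x^k|<\eta/2$ for $k\in\{i,j\}$ and all $t\geq T_{ij}$. The triangle inequality then gives $\bigl||x_i(t)-x_j(t)|-\Delta^{ij}\bigr|<\eta$. If $\Delta^{ij}<c$, this yields $|x_i(t)-x_j(t)|<\Delta^{ij}+\eta=c$. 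If $\Delta^{ij}>c$, it yields $|x_i(t)-x_j(t)|>\Delta^{ij}-\eta=c$. Taking $T=\max_{(i,j)\in E}T_{ij}$, the set $E_\mathrm{eff}(t)$ is constant for all $t\geq T$.

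\emph{Same limit opinion implies the edge is in the limit effective graph.} Suppose the limit effective graph exists, realized from some time $T$ on, and suppose $x^i=x^j$. Then $\Delta^{ij}=0<c$. Here $c>0$ is implicitly needed; if $c=0$, the edge set $E_\mathrm{eff}(t)$ is always empty, so I will note that $c>0$ is assumed. By the same $\eta/2$ argument as in the first part, or directly from the inequality \eqref{eq:same_limit_opinion} in the proof of Lemma~\ref{lemma:adaptive_DW_influence_T}, $|x_i(t)-x_j(t)|<c$ for all sufficiently large $t$. So $(i,j)\in E_\mathrm{eff}(t)$ for large $t$. Since $E_\mathrm{eff}(t)$ equals the limit effective graph's edge set for $t\geq T$, the edge $(i,j)$ belongs to it.

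\emph{Edge in the limit effective graph implies the same limit opinion (almost surely).} Suppose $(i,j)$ is in the limit effective graph, so $|x_i(t)-x_j(t)|<c$ for all $t\geq T$. By Lemma~\ref{lemma:infinite_interactions}, almost surely $i$ and $j$ interact at infinitely many times $t_1<t_2<\cdots$ with each $t_k\geq T$. At each such time they are receptive, so the update \eqref{eq:opinion_update_positive} applies, and
\[
|x_i(t_k+1)-x_i(t_k)|=\mu|x_i(t_k)-x_j(t_k)|.
\]
As $k\to\infty$, the left-hand side tends to $0$ because $x_i(t)$ converges (the sequence is Cauchy). Since $\mu>0$, this forces $|x_i(t_k)-x_j(t_k)|\to 0$. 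Along the same subsequence, however, $|x_i(t_k)-x_j(t_k)|\to\Delta^{ij}$, so $\Delta^{ij}=0$ and hence $x^i=x^j$. The only probabilistic ingredient is Lemma~\ref{lemma:infinite_interactions}, which is why the conclusion holds almost surely.

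I expect the main obstacle to be bookkeeping rather than substance. The existence claim must hold on every realization of the simulation, which it does, since it depends only on convergence of the opinions. The "almost surely" qualifier must be attached only to the final implication. The $c>0$ caveat must also be handled in the second part.
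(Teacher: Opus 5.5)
Your proof is correct, and the first two parts match the paper. The paper packages your $\eta/2$ estimate, with $\eta=|c-\Delta^{ij}|$, as Lemma~\ref{lemma:baseDW_eff_edges}. It reads off existence from statements (1) and (2) of that lemma. It gets the implication ``$x^i=x^j$ implies $(i,j)$ is in the limit effective graph'' from statement (1), tacitly using $c>0$ exactly as you flag.

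The third part is where you take a different route. The paper uses statement (2) of Lemma~\ref{lemma:baseDW_eff_edges} to exclude $\Delta^{ij}>c$ and concludes $\Delta^{ij}<c$. It then invokes a separate Lemma~\ref{lemma:baseDW_less_than_c}, an explicit $\epsilon$-contradiction with $\epsilon<\min\{\tfrac14(c-\Delta^{ij}),\,\Delta^{ij}/(2(1+1/\mu))\}$. There, a single receptive interaction after all opinions lie within $\epsilon$ of their limits would push $x_j$ out of its $\epsilon$-window.

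You instead take receptiveness directly from membership in the limit effective graph. You then pass to the limit along the infinitely many interaction times, using only that $x_i(t)$ is Cauchy. The mechanism is the same in both: a receptive interaction moves an opinion by $\mu$ times the current gap, which is incompatible with convergence unless the gap vanishes.

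Your version is shorter and never has to place $\Delta^{ij}$ relative to $c$, so it also covers $\Delta^{ij}=c$. The paper's step from ``not $>c$'' to ``$<c$'' silently needs the no-exactly-$c$ hypothesis to carry over into the second half of the theorem.

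What the paper's decomposition buys in return is Lemma~\ref{lemma:baseDW_less_than_c} as a standalone fact about limits. It says that adjacent nodes whose limit opinions differ by less than $c$ almost surely share a limit opinion, with no reference to the effective graph stabilizing.
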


To prove Theorem \ref{thm:effgraph}, we first prove two lemmas. 

\begin{lemma}\label{lemma:baseDW_eff_edges}
Consider the adaptive edge-weighted DW model with opinion update rule (\ref{eq:opinion_update_positive}, \ref{eq:opinion_update_negative}) and edge-weight update rule (\ref{eq:edge_weight_increase}, \ref{eq:edge_weight_decrease}).
In a simulation, there is a time $T$ such that the following statements hold for all adjacent nodes $i$ and $j$.
\begin{itemize}
    \item[(1)]{If $\Delta^{ij} = |x^i - x^j| < c$, then $|x_i(t) - x_j(t)| < c$ and the edge $(i,j)$ is in the effective graph for all times $t \geq T$.}
    \item[(2)]{If $\Delta^{ij} = |x^i - x^j| > c$, then $|x_i(t) - x_j(t)| > c$ and the edge $(i,j)$
    is not in the effective graph for any time $t \geq T$.}
\end{itemize}
\end{lemma}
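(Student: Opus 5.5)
The plan is to exploit the convergence of every node's opinion to its limit opinion, which Theorem~\ref{theorem:stabilization} guarantees. Since there are finitely many nodes, and hence finitely many adjacent pairs, I would pick a uniform margin and then a single time $T$ after which all opinions lie within that margin of their limits. I would define
\[
\eta := \min\Bigl\{ |\Delta^{ij} - c| : (i,j)\in E,\ \Delta^{ij}\neq c \Bigr\},
\]
with the convention $\eta = 1$ if the set is empty. This is a minimum of finitely many positive numbers, so $\eta>0$. Because $x_k(t)\to x^k$ for each of the $N$ nodes, there is a time $T$ such that $|x_k(t)-x^k| < \eta/2$ for all nodes $k$ and all $t\geq T$. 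I would take the maximum of the finitely many individual times to get this $T$.

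Next I would apply the triangle inequality in both directions, exactly as in the proof of Lemma~\ref{lemma:adaptive_DW_influence_T}. For $t\geq T$ and adjacent nodes $i$ and $j$, this gives
\[
\bigl|\,|x_i(t)-x_j(t)| - \Delta^{ij}\,\bigr| \leq |x_i(t)-x^i| + |x_j(t)-x^j| < \eta\,.
\]

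For statement (1), suppose $\Delta^{ij}<c$. Then $\Delta^{ij} \leq c-\eta$, so $|x_i(t)-x_j(t)| < \Delta^{ij}+\eta \leq c$. Hence $(i,j)\in E_\mathrm{eff}(t)$ by the definition of the effective graph.

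For statement (2), suppose $\Delta^{ij}>c$. Then $\Delta^{ij}\geq c+\eta$, so $|x_i(t)-x_j(t)| > \Delta^{ij}-\eta \geq c$. This strict inequality excludes $(i,j)$ from $E_\mathrm{eff}(t)$.

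There is no serious obstacle. The one point needing care is that $T$ must be uniform over all edges, which is why $\eta$ is a minimum over the finite edge set rather than chosen edge by edge. Edges with $\Delta^{ij}=c$ must be excluded from that minimum, since they would force $\eta=0$. The lemma makes no claim about such edges, so excluding them is harmless.

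The statement is deterministic given a realization with convergent opinions. Convergence holds for every realization by Theorem~\ref{theorem:stabilization}, because the opinion update matrices satisfy its conditions (1)--(3). So no "almost surely" qualifier is needed here.
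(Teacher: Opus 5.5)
Your proposal is correct and is essentially the paper's argument: convergence of each opinion to its limit, the triangle inequality, and finiteness of the edge set. The only difference is bookkeeping. The paper fixes an edge-specific time $T_{ij}$ with margin $\tfrac{1}{2}|c-\Delta^{ij}|$ and then takes the maximum over edges with $\Delta^{ij}\neq c$, whereas you take the minimum margin $\eta$ first to get a single $T$, which is equivalent.
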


\begin{proof}
Consider adjacent nodes $i$ and $j$ with
$\Delta^{ij} \neq c$. Choose a time $T_{ij}$ such that
\begin{equation}
    |x_k(t) - x^k| < \frac{1}{2} |c - \Delta^{ij}|
\end{equation}
for nodes $k \in \{i,j\}$ and all times $t \geq T_{ij}$.

First suppose that $\Delta^{ij} < c$. For all times $t\geq T_{ij}$, we have
\begin{align*}
	|x_i(t) - x_j(t)| &\leq |x_i(t) - x^i| + |x^i - x^j| + |x_j(t) - x^j| \\
			&< 2\left(\frac{1}{2}\right)(c - \Delta^{ij}) + \Delta^{ij} \\
			&= c \,.
\end{align*}
Therefore, the edge $(i,j)$ is in the effective graph for all times $t \geq T_{ij}$.

Now suppose that $\Delta^{ij} > c$. Without loss of generality, let $x^i > x^j$. For all times $t\geq T_{ij}$, we have
\begin{align*}
x_i(t) - x_j(t) &> \left(x^i -  \frac{1}{2} |c - \Delta^{ij}| \right) - \left(x^j+ \frac{1}{2} |c - \Delta^{ij}| \right) \\
		&= (x^i - x^j) - |c - \Delta^{ij}|\\
		&= \Delta^{ij} - \Delta^{ij} + c \\
		&= c \,.
\end{align*}
Therefore, the edge $(i,j)$ is not in the effective graph for any time $t \geq T_{ij}$.

With
\begin{equation}
	T = \max\limits_{(i,j) \in E} \{T_{ij} \text{~such that~} |x^i - x^j| \neq c\} \,,
\end{equation} it follows that statements (1) and (2) hold for all times $t \geq T$.

\vspace{-52pt}
\[\] 

\end{proof}

\begin{lemma}\label{lemma:baseDW_less_than_c}
For adjacent nodes $i$ and $j$ with $|x^i - x^j| < c$, we have $x^i = x^j$ almost surely.
\end{lemma}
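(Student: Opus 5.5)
We argue by contradiction on the event that adjacent nodes $i$ and $j$ satisfy $\Delta^{ij} = |x^i - x^j| < c$ but $x^i \neq x^j$, and we show that this event has probability $0$.

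By Lemma~\ref{lemma:baseDW_eff_edges}(1), there is a time $T$ such that $|x_i(t) - x_j(t)| < c$ for all $t \geq T$. Consequently, every interaction of $i$ and $j$ after time $T$ is a positive interaction that triggers an opinion update. By Lemma~\ref{lemma:infinite_interactions}, nodes $i$ and $j$ almost surely interact infinitely many times. So, almost surely, there are infinitely many times $t \geq T$ at which
\[
x_i(t+1) - x_i(t) = \mu\bigl(x_j(t) - x_i(t)\bigr)\,.
\]

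Because the limit opinions $x^i$ and $x^j$ exist (Theorem~\ref{theorem:stabilization}), the sequence $\{x_i(t)\}$ is Cauchy. Hence $|x_i(t+1) - x_i(t)| \to 0$ as $t \to \infty$. Along the infinite subsequence of interaction times, we therefore get
\[
\mu\,|x_j(t) - x_i(t)| \to 0\,.
\]
Because $\mu > 0$, this gives $|x_i(t) - x_j(t)| \to 0$ along that subsequence. The full sequence $x_i(t) - x_j(t)$ converges to $x^i - x^j$, so its limit must equal the subsequential limit $0$. Therefore $x^i = x^j$, which contradicts the assumption. The only exceptional outcomes lie in the null event on which $i$ and $j$ interact only finitely often, so the conclusion holds almost surely.

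The step that needs the most care is how the probabilistic part fits with the pathwise part. The existence of limit opinions and the time $T$ from Lemma~\ref{lemma:baseDW_eff_edges} are pathwise facts, while the infinitude of interactions holds only almost surely. Intersecting the full-measure event of Lemma~\ref{lemma:infinite_interactions}, taken over the finitely many edges, with the sure event of convergence handles this. On that intersection, the argument above is deterministic, and the statement follows.
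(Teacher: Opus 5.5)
Your proof is correct and is essentially the paper's argument: once $i$ and $j$ remain receptive forever, each of their almost surely infinitely many interactions moves $x_i$ by $\mu|x_i(t)-x_j(t)|\approx\mu\Delta^{ij}$, and this is incompatible with convergence of the opinions unless $\Delta^{ij}=0$. The paper writes this with an explicit $\epsilon$: it derives receptivity directly from $\epsilon<\frac{1}{4}(c-\Delta^{ij})$ and chooses $\epsilon$ so that a single receptive interaction after $T_{ij}$ pushes $x_j$ out of the $\epsilon$-neighbourhood of $x^j$. You instead cite Lemma~\ref{lemma:baseDW_eff_edges}(1) for receptivity and use the Cauchy property along the subsequence of interaction times, which is a slightly cleaner way to say the same thing.
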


\begin{proof}
Fix adjacent nodes $i$ and $j$ with $\Delta^{ij} = |x^i - x^j| < c$.
Without loss of generality, let $x^i > x^j$.
We then have $0 < \Delta^{ij} < c$. 

Fix $\epsilon$ so that $0 < \epsilon < \min\{\frac{1}{4} (c - \Delta^{ij}), \frac{\Delta^{ij}}{2(1+ 1/\mu)}\}$,
and choose a time $T_{ij}$ so that
\begin{equation}\label{eq:baseDW_less_c_ineq1}
	|x^k - x_k(t)| < \epsilon
\end{equation}
for each node $k$ and all times $t \geq T_{ij}$. By Lemma \ref{lemma:infinite_interactions}, there is almost surely some time $t \geq T_{ij}$ at which nodes $i$ and $j$ interact with each other.
The inequality $\epsilon < \frac{1}{4} (c - \Delta^{ij})$ implies that
\begin{align*}
    |x_i(t) - x_j(t)| &\leq |x_i(t) - x^i| + |x^i - x^j| + |x^j - x_j(t)| \\
    			&< \frac{1}{4} (c - \Delta^{ij}) + \Delta^{ij} + \frac{1}{4} (c - \Delta^{ij}) \\
   			 &= \frac{1}{2} (\Delta^{ij} + c) < c \,,
\end{align*}
so nodes $i$ and $j$ are receptive to each other when they interact with each other at time $t$. Therefore, they update their opinions and
\begin{align}
    x_j(t + 1) &= x_j(t) + \mu (x_i(t) - x_j(t)) \nonumber \\
    		&\geq x_j(t) + \mu (x^i - \epsilon - (x^j + \epsilon)) \nonumber \\ 
    		&= x_j(t) + \mu (\Delta^{ij} - 2\epsilon) \nonumber \\
    		&\geq (x^j - \epsilon) + \mu (\Delta^{ij} - 2\epsilon) \nonumber \\
    		&> x^j + \epsilon \,, \label{eq:baseDW_less_c_ineq2}
\end{align}
where the last inequality holds because 
$\epsilon < \frac{\Delta^{ij}}{2(1 + 1/\mu)}$, which we rearrange to obtain 
$2\epsilon < \mu(\Delta^{ij} - 2\epsilon)$.

The inequality \eqref{eq:baseDW_less_c_ineq1} guarantees that
\begin{equation}
	|x^j - x_j(t + 1)| < \epsilon \,,
\end{equation}
which cannot hold simultaneously with the inequality \eqref{eq:baseDW_less_c_ineq2}.
Therefore, nodes $i$ and $j$ almost surely do not interact with each other at times $t \geq T_{ij}$, which implies that
$0 < x^i - x^j < c$ with probability $0$. 
Because $x^i - x^j < c$ by assumption, we almost surely have $x^i = x^j$.

\vspace{-52pt}
\[\] 

\end{proof}

We now use Lemma \ref{lemma:baseDW_eff_edges} and Lemma \ref{lemma:baseDW_less_than_c} to prove Theorem \ref{thm:effgraph}.

\begin{proof}[Proof of Theorem \ref{thm:effgraph}]

Consider one simulation of our adaptive edge-weighted DW model. By Lemma \ref{lemma:baseDW_eff_edges}, there is a time $T$ such that statements (1) and (2) of Lemma \ref{lemma:baseDW_eff_edges} hold for all times $t \geq T$.
If we suppose that no two limit opinions $x^i$ and $x^j$ satisfy $|x^i - x^j| = c$, then $E_\mathrm{eff}(t) = E_\mathrm{eff}(T)$ for all times $t\geq T$. 
Therefore, the effective graph is eventually constant for all $t \geq T$. 
That is, {the limit effective graph exists.}

Now suppose that {the} limit effective graph exists.

For adjacent nodes $i$ and $j$ with the same limit opinion (i.e., $x^i = x^j$), we know that $|x^i - x^j| = 0 < c$. By statement (1) in Lemma \ref{lemma:baseDW_eff_edges}, it follows that there exists a time $T$ such that edge $(i,j)$ is in the effective graph for 
all times $t \geq T$. Therefore, the edge $(i,j)$ is in the limit effective graph.

Now consider edge $(i,j)$ in the limit effective graph. We seek to show that $x^i = x^j$.
Because $(i,j)$ is in the limit effective graph, there exists a time $\tilde{T}$ such that $(i,j) \in E_\mathrm{eff}(t)$ for all times $t \geq \tilde{T}$.
Consequently, by statement (2) of Lemma \ref{lemma:baseDW_eff_edges}, it cannot be the case that $|x^i - x^j| > c$. 
Therefore, $|x^i - x^j| < c$.
By Lemma \ref{lemma:baseDW_less_than_c}, we almost surely have $x^i = x^j$. 
\end{proof}


{\section{Details of our Numerical simulations}}
\label{sec:num_simulations}


\subsection{Network structures}\label{sec:network_structures}

We simulate our adaptive edge-weighted DW model on both synthetic {graphs} and {graphs that arise from real-world social networks}.
We employ complete graphs, Erd\H{o}s--R\'{e}nyi (ER) graphs, degree-regular cycle-like graphs, and {several real-world networks}.
{The real-world networks that we consider are the \textsc{NetScience} network \cite{Netscience} (which is a network of network-science researchers and the coauthorship relationships between them) and the \textsc{Facebook100} networks (which are single-university Facebook friendship networks)~\cite{Facebook100}. The {graphs} in our simulations have different edge densities
\begin{equation}
	\rho = \frac{2m}{N(N - 1)} \,, \label{eq:edge_density}
\end{equation}
{where we recall that $m = |E|$ is the number of edges in a graph and that $N = |V|$ is the number of nodes in a graph}.
A $G(N,p)$ ER graph has $N$ nodes, and there is an edge between each two nodes with independent, homogeneous probability $p$.
{In a degree-regular cycle-like graph, each node has the same degree $k$ (which is even) and is adjacent to $k/2$ neighbors on each side.}
{A $k$-regular cycle-like graph {with} $N$ nodes has adjacency-matrix entries
\begin{equation}
	A_{ij} = 
		\begin{cases}
			   1\,, & 1 \leq |i - j| \!\!\!\! \mod {N} \leq k/2  \\
			     0\,, & \textrm{otherwise}\,.
		 \end{cases}
\end{equation}		 
}

In Table \ref{table:Table0}, we summarize the details of our synthetic graphs.  
For our simulations on complete graphs, we consider graph sizes (i.e., the numbers of nodes) $N \in \{100, 200, 1000\}$. 
For our simulations on ER graphs, we consider graph sizes $N \in \{100,1000\}$ and connection probabilities $p \in \{0.01, 0.1\}$ to investigate the effect of edge density on the dynamics of our adaptive edge-weighted DW model.  
For our simulations on degree-regular cycle-like graphs, we consider degrees $k \in \{10, 50, 100, 200\}$ to investigate the effect of mean degree on our model's dynamics. 
All of our $k$-regular cycle-like graphs have $N = 1000$ nodes.

In Table \ref{table:Table1}, we indicate the numbers of nodes and edges of the real-world networks that we examine.
For each {graph}, we use the largest connected component (i.e., the component with the most nodes).
We consider two types of real-world networks. The \textsc{NetScience} network is a graph of network-science researchers and coauthorship relationships between them~\cite{Netscience},
and each \textsc{Facebook100} network is a graph of individuals 
and the Facebook ``friendships" between them at one university at a single point in time in fall 2005~\cite{Facebook100}.

\begin{table}[h]
\centering
\small 
\setlength{\tabcolsep}{3.5pt}
\caption{The synthetic graphs on which we simulate our adaptive edge-weighted DW model.}
\label{table:Table0}
\begin{tabular}{|p{0.3 \linewidth}|p{0.6\linewidth}|} 
\hline
{Graph} type & Parameters \\ 
\hline \hline
Complete graph & Sizes $N \in \{100,200,1000\}$ \\ 
\hline
$G(N,p)$ ER graph & Sizes $N \in \{100,1000\}$; connection probabilities $p \in \{0.01, 0.1\}$ \\ 
\hline
$k$-regular cycle-like graph & Size $N = 1000$; degrees $k \in \{10, 50, 100, 200\}$ \\
\hline
\end{tabular}
\end{table}

\begin{table}[h]
\centering
\small
\setlength{\tabcolsep}{4pt}
\caption{The real-world networks on which we simulate our adaptive edge-weighted DW model. For each {graph}, we use the largest connected component and indicate the numbers of nodes and edges in that component.
}
\label{table:Table1}
\begin{tabular}{|l|r|r|} 
\hline
{Graph} & \multicolumn{1}{c|}{Number of nodes} & \multicolumn{1}{c|}{Number of edges} \\ 
\hline \hline
{\sc NetScience} & 379 & 914 \\ 
\hline 
{\sc Caltech} & 762 & 16,651 \\
\hline
{\sc Reed} & 962 & 18,812 \\ 
\hline 
{\sc Swarthmore} & 1,657 & 61,049 \\ 
\hline 
{\sc Smith} & 2,970 & 97,133 \\ 
\hline
\end{tabular}
\end{table}

We run numerical simulations on both dense graphs and sparse graphs. 
We consider the complete graphs, $G(1000, 0.1)$ ER graphs, and $k$-regular cycle-like graphs with $k \in \{50, 100, 200\}$ to be ``dense". We consider the $G(1000, 0.01)$ ER graphs, the 10-regular cycle-like graph, the \textsc{NetScience} network, and the \textsc{Facebook100} networks to be ``sparse". The graphs that we call ``dense" have higher edge densities than the graphs that we call ``sparse". Our usage of these terms deviates from standard conventions \cite{newman2018}, and it is motivated by two differences between the trends that we observe in our numerical simulations on these two sets of graphs.


\subsection{{Simulation specifications and guarantees}} 
\label{subsec: simulation_specifications}

The parameters of our adaptive edge-weighted DW model are the edge-weight increase parameter $\gamma$, the edge-weight decrease parameter $\delta$, the confidence bound $c$, the initial edge-weight parameter $z_0$, and the compromise parameter $\mu$. The pair $(\gamma, \delta) = (0,1)$ corresponds to the baseline DW model. In Table \ref{table:Table2}, we summarize the parameter values that we consider in our numerical simulations.

\begin{table}[h]
\centering
\small
\setlength{\tabcolsep}{4pt}
\caption{The parameter values for our numerical simulations of our adaptive edge-weighted DW model.}
\label{table:Table2}
\begin{tabular}{|l|l|} 
\hline
Parameter & Values \\ 
\hline\hline
$\gamma$ & $\{0.1, 0.5, 0.9\}$ \\ 
\hline
$\delta$ & $\{0.1, 0.5, 0.9\}$ \\ 
\hline
$c$ & $\{0.1, 0.2, 0.3, 0.4, 0.5, 0.6, 0.7, 0.8, 0.9\}$ \\
\hline
$z_0$ & $\{0.1, 0.5, 0.9\}$ \\
\hline
$\mu$ & $\{0.1, 0.3, 0.5\}$ \\
\hline
\end{tabular}
\end{table}

Our simulations include randomness from the initial node opinions, the selection of a node pair to interact at each discrete time step, and the generation of a graph for our ER graphs. 
We use Monte Carlo simulations to account for this randomness.
For the simulations of our adaptive edge-weighted DW model on ER graphs, we generate 5 graphs for each parameter set and pair $(N,p)$. 
For each synthetic graph, we generate 20 sets of initial opinions; for each real-world network, we generate 10 sets of initial opinions. 
We calculate means of all summary statistics for the 10 or 20 simulations from the sets of initial opinions.

The opinions of the nodes of a graph can take an arbitrarily long time to converge to limit opinions, so we specify a stopping criterion to determine when we stop a simulation. 
We base our stopping criterion on the stopping criterion in \cite{bcm_paper}.
At each discrete time $t$, we consider the effective graph $G_{\mathrm{eff}}(t)$. 
Let $K_1(t),\ldots, K_r(t)$ be the sets of nodes of the connected components of $G_{\mathrm{eff}}(t)$; we refer to these sets as \emph{opinion clusters} at time $t$. 
We stop a simulation if the maximum opinion difference between the nodes in each opinion cluster is less than some specified tolerance.
That is, we check 
\begin{equation} \label{stopping}
	\max \{|x_i(t) - x_j(t)| : i,j \in K_r(t) \text{ for some } r \} < \mathrm{tol} \,, 
\end{equation}
where $\mathrm{tol}$ denotes the tolerance. 
We use $\mathrm{tol} = 1 \times 10^{-2}$ in all of our simulations. 
If a simulation reaches the stopping criterion \eqref{stopping} at time $T_f$, we refer to $T_f$ as the simulation's ``convergence time" and to the effective graph at $T_f$ as the ``final effective graph". 
We refer to the opinion clusters of a final effective graph as the ``final opinion clusters".

{If a simulation takes too long to converge, we stop it when it reaches a bailout time. 
For the 10-regular cycle-like graph, the bailout time is $10^{10}$. For all other graphs, the bailout time is $10^8$. 
 If a simulation reaches the bailout time, we say that it ``bails out'', and we stop it whether or not it meets the stopping criterion \eqref{stopping}. With our bailout times, most
 of our simulations do not bail out}. 
In our numerical experiments, the $G(1000, 0.01)$ ER graphs had the largest percentage of bailouts, with 0.22\% of all simulations bailing out and 6.56\% of the simulations bailing our when $\delta = \gamma = c = 0.1$. In these ER graphs, bailouts occurred when $\delta = 0.1$ and $c \in \{0.1, 0.2\}$.

Our theoretical results about effective graphs help inform our stopping criterion \eqref{stopping}.
In Theorem \ref{thm:effgraph}, we proved that an effective graph almost surely eventually has edges only between nodes in the same limit opinion cluster.
Therefore, using the stopping criterion \eqref{stopping}, the final opinion clusters in our simulations approximate the limit opinion clusters.

In our simulations of our adaptive edge-weighted DW model on complete graphs, the {sets of nodes in the} final opinion clusters are precisely the limit opinion clusters if the tolerance is less than the confidence bound.
In Theorem \ref{thm:final_op_clusters}, we give a precise statement of this result.

\begin{theorem} \label{thm:final_op_clusters}
    Let $G = (V,E)$ be a complete graph. Suppose that $\mathrm{tol} < c$, and let $T_f$ be the time that a simulation reaches the stopping criterion \eqref{stopping}. 
    Let $\{S_a\}_{a = 1}^r$ be the set of final opinion clusters, and let $J(i, x_i(t))$ denote the set of nodes that are receptive to node $i$ at time $t$ (i.e., the nodes $j$ such that $|x_i(t) - x_j(t)| < c$). 
    {The following statements hold.
    \begin{enumerate}
        \item[(1)]{The inequality $\min_{i \in S_a, j \in S_b, a\neq b} \{|x_i(t) - x_j(t)|\} \geq c$ holds for all times $t \geq T_f$.} 
        \item[(2)]{For a fixed final opinion cluster $S_a$ and node $i \in S_a$, we have $J(i, x_i(t)) = S_a$ for all times $t \geq T_f$.}
    \end{enumerate}} 
\end{theorem}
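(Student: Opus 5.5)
The plan is to first establish the two claims at the single time $t = T_f$, and then show that the configuration at $T_f$ is invariant under every possible later interaction, so that both claims persist for all $t \geq T_f$ by induction on $t$.

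\textbf{Step 1 (the situation at $T_f$).} On a complete graph every pair of nodes is adjacent. The final opinion clusters $S_1,\dots,S_r$ are therefore the connected components of the graph on $V$ whose edges join pairs with $|x_i(T_f)-x_j(T_f)|<c$. By the stopping criterion \eqref{stopping}, every pair $i,j$ in the same $S_a$ satisfies $|x_i(T_f)-x_j(T_f)|<\mathrm{tol}<c$, so each $S_a$ is a clique of mutually receptive nodes.

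Now take $i\in S_a$ and $j\in S_b$ with $a\neq b$. These nodes are adjacent in $G$. If they were receptive, the edge $(i,j)$ would lie in $G_\mathrm{eff}(T_f)$ and $i,j$ would belong to the same component, which is a contradiction. Hence $|x_i(T_f)-x_j(T_f)|\geq c$, which is claim (1) at $T_f$. Combining the two facts gives $J(i,x_i(T_f))=S_a$, which is claim (2) at $T_f$.

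\textbf{Step 2 (invariance).} Suppose that at some time $t\geq T_f$ the following two conditions hold:
\begin{itemize}
\item[(a)] each $S_a$ has opinion spread $\max_{k,l\in S_a}|x_k(t)-x_l(t)|<\mathrm{tol}$;
\item[(b)] every cross-cluster pair has opinion difference at least $c$.
\end{itemize}
Consider the interaction chosen at time $t$. If it is a cross-cluster pair, condition (b) means there is no opinion update; only the edge weight changes, and edge weights do not affect the claims.

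If the pair $i,j$ lies in the same cluster $S_a$, the update moves $x_i$ and $x_j$ toward each other, and both new values lie in $[\min(x_i,x_j),\max(x_i,x_j)]$ because $\mu\le 1/2$. Consequently the convex hull (interval) $[\min_{k\in S_a}x_k,\max_{k\in S_a}x_k]$ can only shrink, and no opinion leaves it. This gives two consequences:
\begin{itemize}
\item[(i)] the intra-cluster spread stays below $\mathrm{tol}$;
\item[(ii)] for any node $k$ in another cluster $S_b$, its distance to every node of $S_a$ is at least its distance to the nearest endpoint of the interval at time $t$.
\end{itemize}

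\textbf{Step 3 (the main obstacle).} Claim (ii) alone is not quite enough. Condition (b) says that every point of $S_a$'s hull is at distance at least $c$ from every opinion in $S_b$. Because the hull is determined by the actual opinions of $S_a$, the set of new opinions is a subset of the old hull, and we need the old hull itself to be separated from $S_b$. This separation does hold: the distance from a point $y$ to the interval spanned by opinions that are all at distance $\ge c$ from $y$ is $\ge c$ exactly when $y$ lies outside that interval. We must therefore rule out the possibility that some $S_b$ opinion lies strictly inside $S_a$'s hull. This follows because the hull has length $<\mathrm{tol}<c$, so a point inside it would be within distance $<c$ of an endpoint, contradicting (b).

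Thus conditions (a) and (b) hold at $t+1$, and by induction they hold for all $t\geq T_f$. Claim (2) then follows at each time exactly as in Step 1. The hypothesis $\mathrm{tol}<c$ is used in two places: in Step 1, to show that the clusters are cliques of mutually receptive nodes, and in Step 3, to rule out an interleaving of $S_b$ opinions inside $S_a$'s hull.
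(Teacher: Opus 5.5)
Your proposal is correct and follows essentially the same route as the paper: you establish (1) and (2) at $T_f$ from the definition of the effective graph and the completeness of $G$, then induct on $t$ using the facts that cross-cluster pairs never update and that each cluster's opinion interval can only shrink. Your Step 3 supplies a detail the paper leaves implicit when it passes from the monotonicity of each cluster's minimum and maximum opinions to the non-decrease of the minimum cross-cluster distance. That detail is that, because each cluster's spread is less than $\mathrm{tol}<c$, no opinion in $S_b$ can lie inside the hull of $S_a$. One small wording issue: the first sentence of Step 3 attributes this hull separation to condition (b) itself, when (b) gives only separation of the actual opinions and the hull separation is what you go on to prove.
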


\begin{proof}
    We proceed by induction on time $t \geq T_f$. 

    We start with the base case $t = T_f$. For any edge $(i,j)$, we have $|x_i - x_j| < c$ if and only if $(i,j) \in E_{\mathrm{eff}}(t)$ (i.e., $i$ and $j$ are in the same final opinion cluster $S_a$). 
    In a complete graph, $(i,j) \in E$ for all distinct nodes $i$ and $j$. 
    Therefore, if nodes $i$ and $j$ are not in the same opinion cluster at time $t$, it is necessarily true that $|x_i(t) - x_j(t)| \geq c$. 
    This verifies statement {(1).}
    To verify statement {(2)},
    fix node $i \in S_a$. For any node $j \in S_a$, we have $|x_i(t) - x_j(t)| < \mathrm{tol} < c$, which implies that $j \in J(i,x(t))$. Moreover, for $j \in J(i, x(t))$, we have $|x_i(t) - x_j(t)| < c$, so nodes $i$ and $j$ are in the same opinion cluster at time $t$ (i.e., $j \in S_a$). 
    Consequently, $S_a = J(i, x(t))$. 

    {We now do the inductive step.}
        Suppose that statements
    {(1) and (2)}
    hold for times $t \geq T_f$.

    Fix a node $i \in S_a$.
    Node $i$ updates its opinion at time $t + 1$ if and only if it interacts with a node $j \in S_a$ at time $t + 1$. 
    Therefore, for any final opinion cluster $S_a$, we have 
    \begin{align}
        \min_{i \in S_a} x_i(t + 1) &\geq  \min_{i \in S_a} x_i(t) \label{eq:num1} \,, \\
        \max_{i \in S_a} x_i(t + 1) &\leq \max_{i\in S_a} x_i(t) \label{eq:num2}\,.
    \end{align}
 By invoking \eqref{eq:num1} and \eqref{eq:num2}, we verify statement {(1)}:
 \begin{align}
     \min_{\substack{\ i \in S_a  \\ \ j \in S_b  \\ a \neq b}} \; |x_i(t + 1) - x_j(t + 1)|  &\geq 
      \min_{\substack{\ i \in S_a  \\ \  j \in S_b  \\ a \neq b}} \; |x_i (t) - x_j(t)| \nonumber \\  &\geq c \,. \label{eq:num3}
 \end{align}

 Now fix a final opinion cluster $S_a$. For nodes $i \in S_a$ and $j \in S_b$ in distinct final opinion clusters (i.e., $a \neq b$), the inequality \eqref{eq:num3} implies that
 \begin{align}
	  |x_i(t + 1) - x_j(t + 1)| &\geq 
      \min_{\substack{\ i \in S_a \\ \  j \in S_b \\ a \neq b}} \; 
      |x_i(t + 1) - x_j(t + 1)|   \notag \\ 
	 					 &\geq c \,,
 \end{align}
 which implies that $j \notin J(i,x(t + 1))$ and hence that $J(i,x(t + 1)) \subseteq S_a$. 
 Using the inequalities \eqref{eq:num1} and \eqref{eq:num2} yields
 \begin{align}
     \max_{i \in S_a} x_i(t + 1) - \min_{i \in S_a} x_i(t + 1) 
     	&\leq \max_{i \in S_a} x_i(t) - \min_{i \in S_a} x_i(t) \nonumber \\
     	&\leq \mathrm{tol} < c \,,
 \end{align}
which implies that $S_a \subseteq J(i,x(t + 1))$, verifying statement {(2).}
\end{proof}

In our simulations of our adaptive edge-weighted DW model, we set the tolerance to be less than the confidence bound. 
Therefore, by Theorem \ref{thm:final_op_clusters}, the 
final opinion clusters in our simulations on complete graphs are the same as 
the limit opinion clusters. 
Theorem \ref{thm:final_op_clusters} also provides heuristics that suggest that the final opinion clusters are reasonable approximations of the limit opinion clusters for our simulations on other graph structures.

{Our definition of limit opinion clusters does not require the limit opinion clusters to be connected, whereas our final opinion clusters (which we define using effective graphs) are always connected. 
For simulations on graphs other than complete graphs, the sets of nodes in the final and limit opinion clusters are not necessarily the same. 
In particular, for non-complete graphs, it is possible for two disconnected sets of nodes to converge to the same limit opinion, which places them in the same limit opinion cluster but in different final opinion clusters. However, in practice, 
it is extremely unlikely for disconnected sets of nodes to converge to the same limit opinion.}


\subsection{Quantifying model behaviors} 

In our simulations, we examine the convergence time, the number of opinion clusters, and the Shannon entropy. As we stated in Section \ref{subsec: simulation_specifications}, the
convergence time is the number $T_f$ of time steps at which a simulation reaches the stopping criterion \eqref{stopping}.
When we consider final opinion clusters, we distinguish between ``major" and ``minor" opinion clusters.
We define a major opinion cluster as a final opinion cluster with more than $1\%$ of the nodes of a {graph}. Accordingly, a major opinion cluster needs to have at least 2 nodes in a 100-node graph and at least 11 nodes in a 1000-node graph.
A minor opinion cluster is a final opinion cluster that is not a major opinion cluster. 
{Our designation of major and minor opinion clusters follows the choices in \cite{laguna2004, bcm_paper}.}

The numbers of major and minor opinion clusters help characterize whether a simulation results in opinion ``consensus" or ``fragmentation".\footnote{In studies of opinion dynamics, it is common to refer to a state with three of more opinion clusters as ``fragmented" and to a state with exactly two opinion clusters as ``polarized"~\cite{starnini2025}. Following \cite{bcm_paper}, we find it more convenient to use the term ``fragmented" for any {non-consensus state} at the convergence time.}
Intuitively, it seems sensible to view a situation with a 1000-node {graph} with 995 nodes with one limit opinion and 5 nodes with some other limit opinion as a consensus state. 
Accordingly, we say that a simulation is in a ``consensus" state if it has only one major opinion cluster, even if there are also 
minor opinion clusters. 
A simulation that results in at least two major opinion clusters is in a state of ``opinion fragmentation". 
Two states of opinion fragmentation can have wildly different numbers of major and minor opinion clusters.

To more precisely quantify opinion fragmentation, we 
account for the number and sizes of opinion clusters. 
Following \cite{entropy,bcm_paper}, we do this by calculating Shannon entropy.
At time $t$, suppose that there are $R$ opinion clusters $K_r(t)$, where $r \in \{1,2, \ldots, R\}$.
The set $\{K_r(t)\}_{r = 1}^R$ of opinion clusters is a partition of the set of nodes of a graph. 
The fraction of nodes in the $r$th opinion cluster is 
$|K_r(t)|/N$, where $N$ is the number of nodes.
The Shannon entropy is
\begin{equation}\label{eq:shannon_entropy}
	H(t) = -\sum_{r = 1}^R \frac{|K_r(t)|}{N}\ln \left(  \frac{|K_r(t)|}{N} \right)\,. 
\end{equation}	

For each simulation, we calculate the Shannon entropy $H(T_f)$ at the convergence time. Shannon entropy is larger when there are more opinion clusters or, for a fixed number of opinion clusters, when the opinion clusters have similar sizes 
rather than more heterogeneous sizes. A larger Shannon entropy thus indicates greater opinion fragmentation. 


\subsection{Summary of our simulation results}\label{sec:sim_results}

We now detail the results of our numerical simulations of our adaptive edge-weighted DW model. 
We organize our results by {graph} type and discuss them in detail in Section \ref{details}. 

In our numerical simulations, we examine the dependence of the number of major opinion clusters, the Shannon entropy (see Eq.~\eqref{eq:shannon_entropy}), and the convergence time on the confidence bound $c$.  
We consider several values of the edge-weight increase parameter $\gamma$, the edge-weight decrease parameter $\delta$, the initial edge-weight parameter $z_0$, and the compromise parameter $\mu$ (see Table \ref{table:Table2}). 
Given a {graph structure}, each of our plots summarizes our simulations for a pair of parameter values --- either $(\gamma,\delta)$ with a common $z_0$ value or $(z_0,\delta)$ with a common $\gamma$ value --- and each curve in a plot corresponds to simulations for a fixed compromise parameter $\mu\in \{0.1, 0.3, 0.5\}$. 
Each point on a curve is a mean of the results of our numerical simulations for the corresponding set of parameters (see Section \ref{subsec: simulation_specifications}).
We show the standard deviations of these quantities as error bars. 
 
For most of our {graphs}, {when the confidence bound} $c$ is small, we usually obtain fewer major opinion clusters in our adaptive edge-weighted DW model for $\mu = 0.1$ than for larger values of $\mu$. 
However, this is not the case for the \textsc{NetScience} network (see Section \ref{sec:netscience}).
We do not observe this difference in the baseline DW model, where the numbers of major opinion clusters tend to be similar across all $\mu$ values. 
For some {graphs}, such as ER graphs and $k$-regular cycle-like graphs, 
{we observe that} the number of major opinion clusters decreases as we decrease $\delta$ when $\mu = 0.1$ and $c$ is small.

The trends {in the Shannon entropies} in our simulations {are similar to the}
trends in the number of major opinion clusters, except for the \textsc{Facebook100} networks, the \textsc{NetScience} network, and the 10-regular cycle-like graph. 
We discuss the differences in these trends in Section \ref{details}.

{We observe two distinct trends in the convergence-time properties
of our adaptive edge-weighted DW model.  
trends appear to differ between our dense graphs (i.e., the complete graphs, $G(1000, 0.1)$ ER graphs, and $k$-regular cycle-like graphs with $k \in \{50,100,200\}$) and our sparse graphs (i.e., the $G(1000,0.01)$ ER graphs, the 10-regular cycle-like graph, the \textsc{NetScience} network, and the \textsc{Facebook100} networks).\footnote{See Section \ref{sec:network_structures} for our definitions of ``dense" and ``sparse", which are not standard.}
 In our {dense graphs}, our adaptive edge-weighted DW model tends to have smaller convergence times than the baseline DW model for small confidence bounds $c$. (For example, $c = 0.1$ and $c = 0.2$ for the 50-regular cycle-like graph and $c = 0.1$ for the complete and $G(1000,0.1)$ ER graphs.)
 In our {sparse graphs}, when the edge-weight decrease parameter is $\delta = 0.1$, our adaptive edge-weighted DW model tends to have larger variances in the convergence time for all values of $c$ and longer convergence times than the baseline DW model for small and moderate values of $c$.
Additionally, the convergence time {for our sparse graphs} decreases as we increase $\delta$.
For $\delta = 0.9$, {our simulations of our adaptive edge-weighted DW model on our sparse graphs converge faster than our simulations of the baseline DW model on these graphs}.
}

{We hypothesize that a {graph's} edge density is an important determining factor for the convergence-time properties of our adaptive edge-weighted DW model. For {graphs} of a fixed size $N$, there are fewer edges to choose for interactions in a {sparse graph} than in a {dense graph}. 
Therefore, in most situations, we expect the total edge weight 
to be smaller in {sparse graphs} than in {dense graphs}.
 The probability to select a specific edge is its weight divided by the total edge weight, so we expect it to be more likely to
 select a small-weight edge in a {sparse graph} than in a {dense graph}.
{When two nodes are linked by an edge whose weight is smaller than 0.1 (which is the smallest value of $z_0$ that we consider), these two nodes necessarily have interacted previously and been unreceptive to each other. Accordingly, when they interact with each other again, it is likely that they will again be} unreceptive to each other.
 Because such nodes are likely to consistently be unreceptive to each other, we expect the probability to choose unreceptive neighbors for interaction to be larger in sparse graphs than in dense graphs.
 Consequently, in our adaptive edge-weighted DW model, the convergence time and its variance 
 are larger in sparse graphs than in dense graphs.}


\section{Simulation results} \label{details}

\subsection{Complete graphs}
\label{sec:complete}

\begin{figure*}[htbp]
  \includegraphics[width=\linewidth]{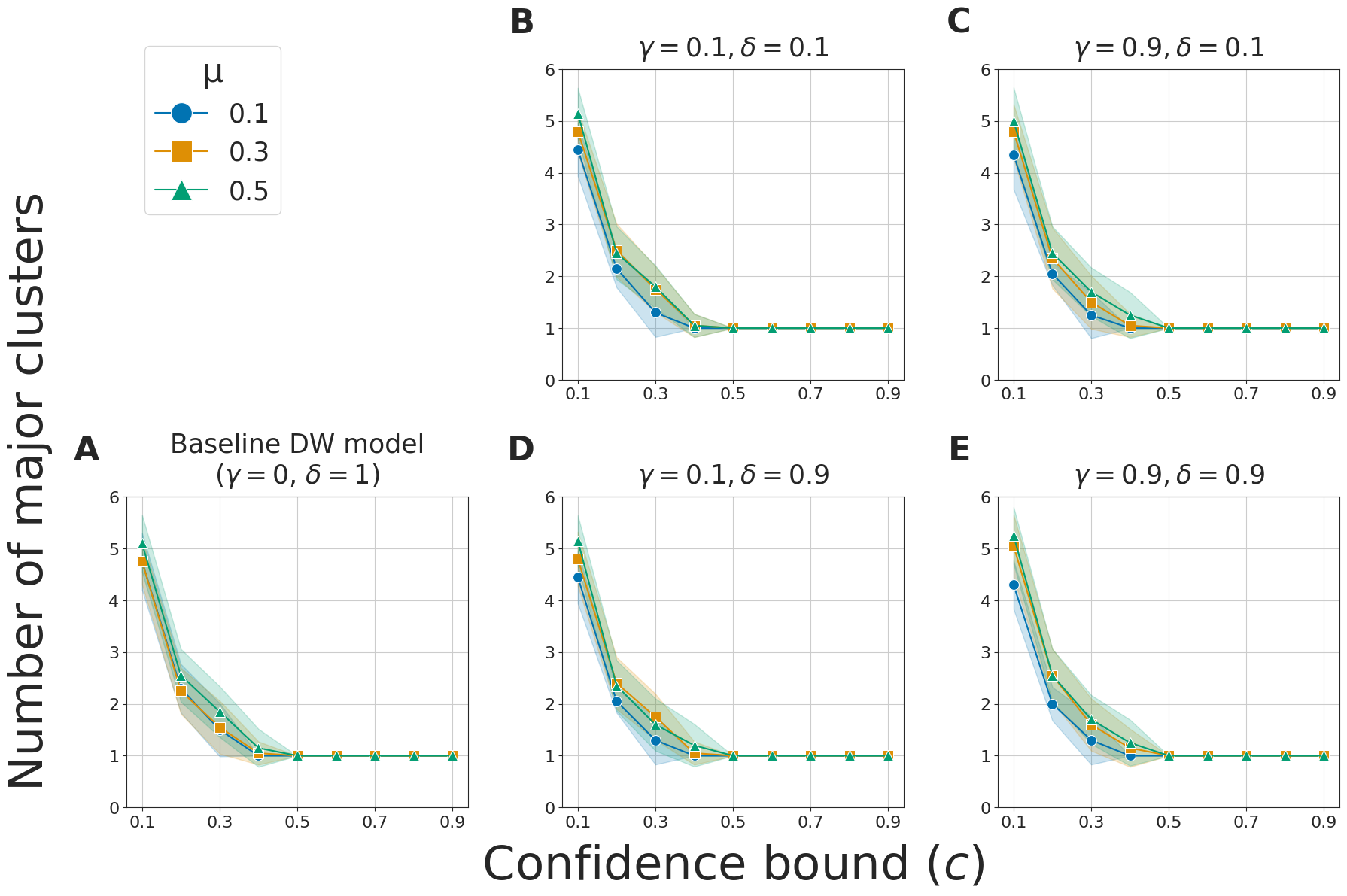}
  \caption{The numbers of major opinion clusters 
  in simulations of our adaptive edge-weighted DW model versus the confidence bound $c$ for different values of the edge-weight increase parameter $\gamma$ and the edge-weight decrease parameter $\delta$ on complete graphs with $N = 100$ nodes. The initial edge-weight parameter is $z_0 = 0.1$.
  }
  \label{figure:complete_opinions}
\end{figure*}

We simulate our adaptive edge-weighted DW model on complete graphs with $N \in \{100,200,1000\}$ nodes. 
For a parameter set $(\gamma, \delta, c, \mu, z_0)$, we obtain a similar number of major opinion clusters for all of these graphs.
For $N \in \{100,200\}$ nodes and small values of the confidence bound $c$ (specifically, $c \leq 0.2$), our simulations yield a smaller number of major opinion clusters for $\mu = 0.1$ than for 
$\mu = 0.3$ and $\mu = 0.5$ (see Figure \ref{figure:complete_opinions}). 
This differs from our observations for the baseline DW model, for which we observe similar numbers of major opinion clusters for all values of $\mu$.

\begin{figure*}[htbp]
  \includegraphics[width=\linewidth]{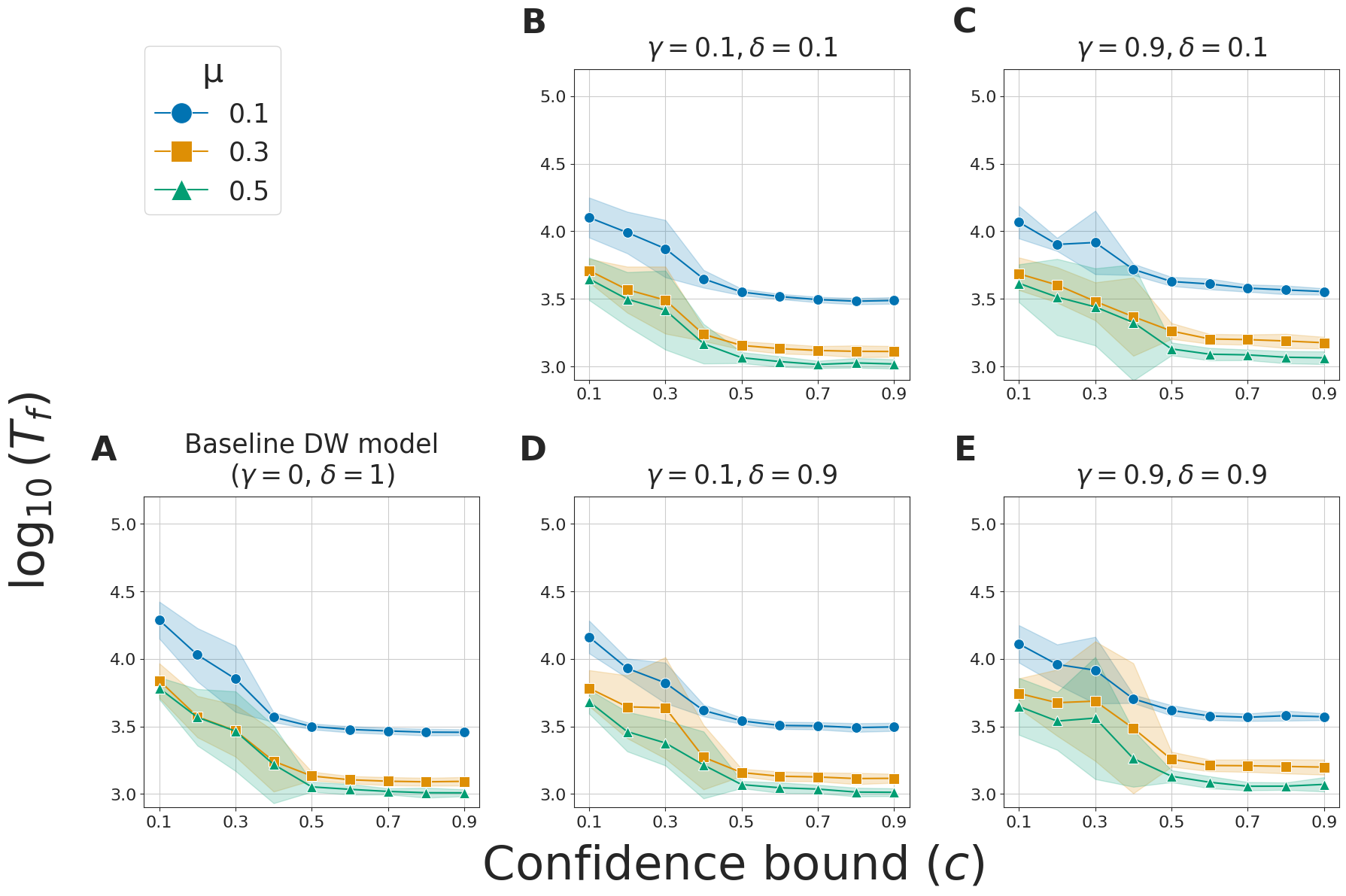}
  \caption{The logarithm of the convergence times of simulations of our adaptive edge-weighted DW model versus the confidence bound $c$ for different values of the edge-weight increase parameter $\gamma$ and the edge-weight decrease parameter 
  $\delta$ on complete graphs with $N = 100$ nodes. The initial edge-weight parameter is $z_0 = 0.1$.
  }
  \label{figure:complete_time}
\end{figure*}

For small confidence bounds $c$ and graph sizes $N \in \{100,200\}$, our adaptive edge-weighted DW model usually converges faster
than the baseline DW model (see Figure \ref{figure:complete_time}).
However, we do not observe this disparity when $(z_0,\delta) = (0.9, 0.9)$.
For $N = 1000$, our adaptive edge-weighted DW model and the baseline DW model have similar numbers of major opinion clusters.
Additionally, for both our adaptive edge-weighted DW model and the baseline DW model, we observe a larger convergence-time variance for the complete graph with $N = 1000$ nodes than for 
the complete graphs with $N \in \{100,200\}$ nodes.

For all examined values of $(\gamma,\delta)$ and all graph sizes $N$, our edge-weighted DW model and the baseline DW model have similar Shannon entropies.
This suggests that our model's adaptive edge weights do not have a significant effect on opinion fragmentation for complete graphs.


\subsection{Erd\H{o}s--R\'{e}nyi graphs} \label{sec:ER}

We simulate our adaptive edge-weighted DW model on $G(N,p)$ ER graphs with $(N,p) \in \{(1000, 0.1), (1000, 0.01), (100, 0.1)\}$.  

Our simulations on $G(1000, 0.1)$ ER graphs yield more major opinion clusters than our simulations on $G(1000, 0.01)$ ER graphs. 
For both $G(1000, 0.1)$ ER graphs and $G(1000, 0.01)$ ER graphs, when the confidence bound $c$ is small (specifically, when $c = 0.1$ or $c \in \{0.1, 0.2\}$), we observe fewer major opinion clusters for a compromise parameter of $\mu = 0.1$ than for $\mu = 0.3$ and $\mu = 0.5$. In the baseline DW model, we do not observe a noticeable difference in the number of major opinion clusters for different values of $\mu$.
For $(\mu, c) = (0.1,0.1)$, the number of major opinion clusters decreases as we decrease $\delta$ for both the $G(1000, 0.1)$ ER graphs and the $G(1000, 0.01)$ ER graphs. 
Additionally, for the $G(1000, 0.1)$ ER graphs and $(\mu, c) = (0.1,0.1)$, the number of major opinion clusters decreases as we decrease $z_0$.

{In our simulations on $G(1000,0.1)$ and $G(1000,0.01)$ ER graphs, the Shannon entropy follows the same trend as the number of major opinion clusters for both our adaptive edge-weighted DW model and the baseline DW model. 
However, as with the numbers of major opinion clusters, for small confidence bounds $c$ 
(specifically, for $c = 0.1$ or $c \in \{0.1, 0.2\}$), the Shannon entropy is (1) different for different values of $\mu$ in our adaptive edge-weighted DW model but is (2) similar for all values of $\mu$ in the baseline DW model.}

\begin{figure*}[htbp]
  \includegraphics[width=\linewidth]{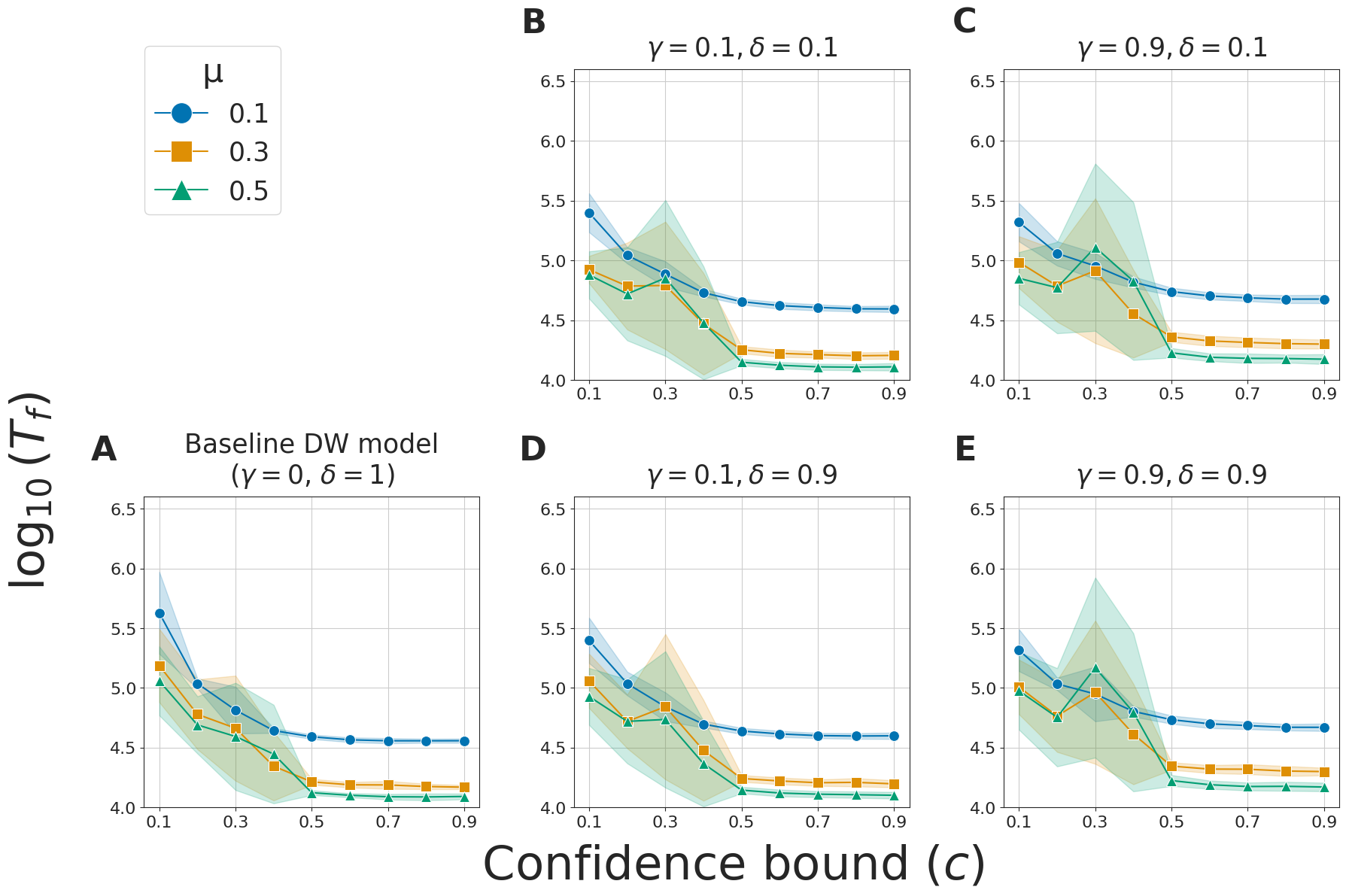}
  \caption{The logarithm of the convergence times of simulations of our adaptive edge-weighted DW model versus the confidence bound $c$ for different values of the edge-weight increase parameter $\gamma$ and the edge-weight decrease parameter 
  $\delta$ on $G(1000, 0.1)$ ER graphs. The initial edge-weight parameter is $z_0 = 0.1$.
  }
  \label{figure:ER_time_1}
\end{figure*}

The adaptivity of the edge weights in our model has
different effects on the convergence times of our simulations on the $G(1000, 0.1)$ ER graphs than on those for
the $G(1000, 0.01)$ ER graphs. 
Our simulations of our adaptive edge-weighted DW model on the $G(1000, 0.1)$ ER graphs yield similar trends as in our simulations on complete graphs with $N = 100$ and $N = 200$ nodes (see Figure \ref{figure:ER_time_1}). 
For a confidence bound of $c = 0.1$, our adaptive edge-weighted DW model takes less time to converge than the baseline DW model for all values of $\delta$ and $\gamma$ for the $G(1000, 0.1)$ E} graphs. 
When $z_0 = 0.1$, the convergence time has a larger variance in our adaptive edge-weighted DW model
than in the baseline DW model for $\mu \in \{0.3,0.5\}$ and $c \leq 0.5$.

\begin{figure*}[htbp]
  \includegraphics[width=\linewidth]{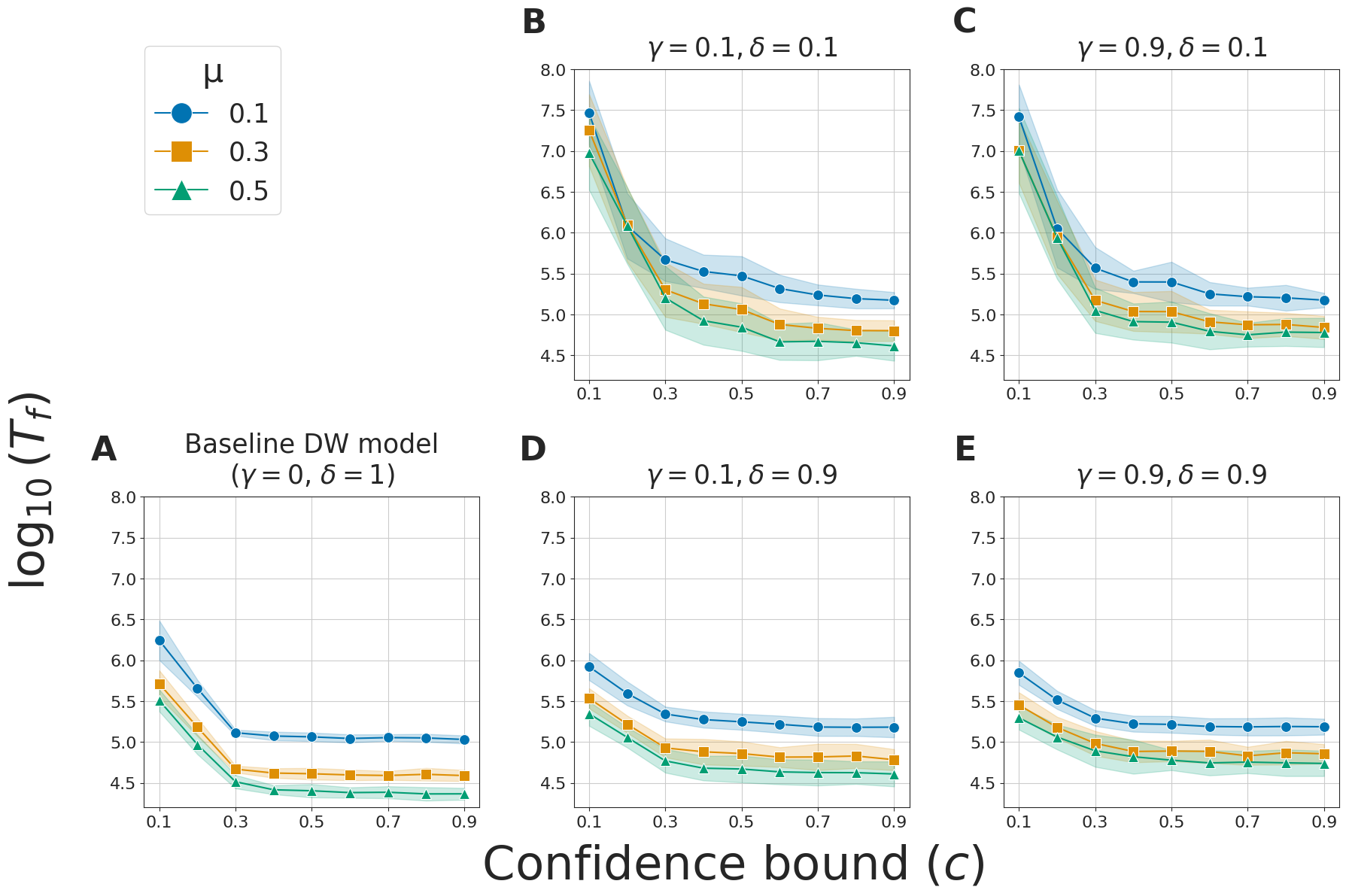}
  \caption{The logarithm of the convergence times of simulations of our adaptive edge-weighted DW model versus the confidence bound $c$ for different values of the edge-weight increase parameter $\gamma$ and the edge-weight decrease parameter 
  $\delta$ on $G(1000, 0.01)$ ER graphs. The initial edge-weight parameter is $z_0 = 0.1$.
   }
  \label{figure:ER_time_2}
\end{figure*}

{For $G(1000, 0.01)$ ER graphs}, we see {in Figure \ref{figure:ER_time_2}} that our adaptive edge-weighted DW model has a larger convergence-time variance than the baseline DW model for all values of the confidence bound $c$ 
{when $\delta = 0.1$}. 
When $\delta = 0.1$ and $c \leq 0.5$, we also observe that our adaptive edge-weighted DW model has longer convergence times than the baseline DW model. 
For small confidence bounds $c$ (specifically, {$c = 0.1$ and $c = 0.2$}), the convergence time decreases as we increase $\delta$. When $\delta = 0.9$ and $c \leq 0.2$, our adaptive edge-weighted DW model has a shorter convergence time than the baseline DW model. Additionally, for an initial edge-weight parameter of $z_0 = 0.1$, our adaptive edge-weighted DW model takes slightly longer to converge than the baseline DW model for $c \geq 0.3$ and all values of $\delta$.

Our simulations on the $G(100, 0.1)$ ER graphs behave similarly to our simulations on $G(1000, 0.01)$ ER graphs.


\subsection{{Degree-regular cycle-like graphs}}
\label{sec:deg-regular}

{To investigate the effect of edge density on the behavior of our adaptive edge-weighted DW model, we simulate the model on $k$-regular cycle-like graphs\map{.}
For $k$-regular cycle-like graphs of the same size, different degrees $k$ correspond directly to different edge densities. We consider degrees $k \in \{10,50,100,200\}$ for graphs with $N = 1000$ nodes. For all three of our summary statistics 
(namely, the number of major opinion clusters, the Shannon entropy, and the convergence time), our simulations have different trends on the $10$-regular cycle-like graph than on $k$-regular cycle-like graphs with $k \in \{50,100,200\}$.}

For small confidence bounds $c$ (specifically, $c \leq 0.2$), the number of major opinion clusters increases as we decrease $k$ in both our adaptive edge-weighted DW model and the baseline DW model. In particular, we observe significantly more major opinion clusters for $k = 10$ than for all other values of $k$. Additionally, our simulations on {the $10$-regular cycle-like graph} have peaks in the number of major opinion clusters for $c = 0.2$. That is, the number increases from $c = 0.1$ to $c = 0.2$ and then decreases from $c = 0.2$ to $c = 0.3$. We do not observe such peaks for the other $k$-regular cycle-like graphs. 
For $c \geq 0.3$, our simulations on all of the examined $k$-regular {cycle-like} graphs tend to achieve consensus.

\begin{figure*}[htbp]
  \includegraphics[width=\linewidth]{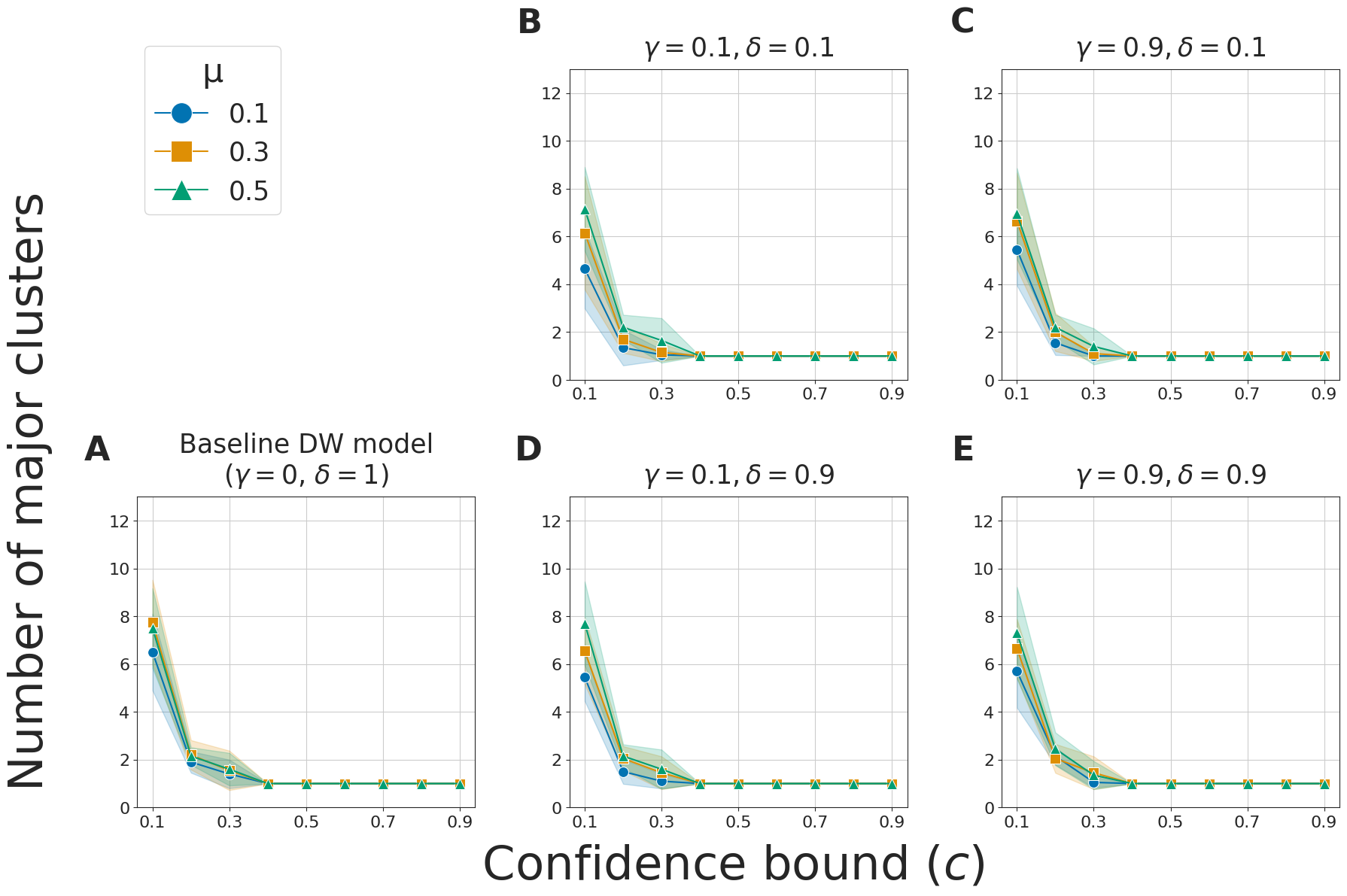}
  \caption{The number of major opinion clusters in simulations of our adaptive edge-weighted DW model versus the confidence bound $c$ for different values of the edge-weight increase parameter $\gamma$ and the edge-weight decrease parameter $\delta$ on a 50-regular {cycle-like} graph.
The initial edge-weight parameter is $z_0 = 0.1$.
  }
  \label{figure:deg_opinions_1}
\end{figure*}

\begin{figure*}[htbp]
  \includegraphics[width=\linewidth]{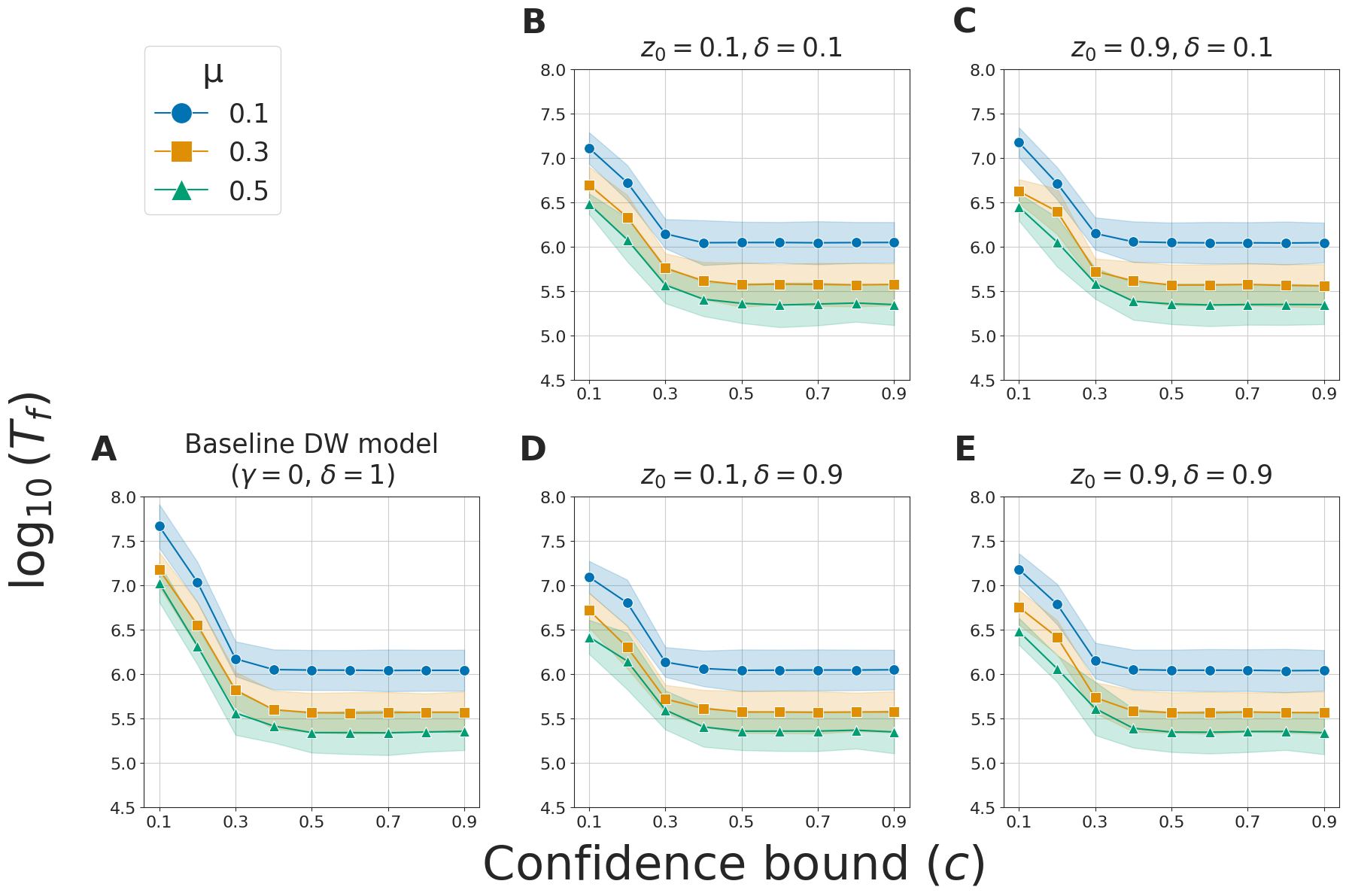}
  \caption{The logarithm of the convergence times in simulations of our adaptive edge-weighted DW model versus the confidence bound $c$ for different values of the edge-weight increase parameter $\gamma$ and the edge-weight decrease parameter $\delta$ on a 50-regular cycle-like graph. The initial edge-weight parameter is $z_0 = 0.1$.
  }
  \label{figure:deg_time_1}
\end{figure*}

\begin{figure*}[htbp]
  \includegraphics[width=\linewidth]{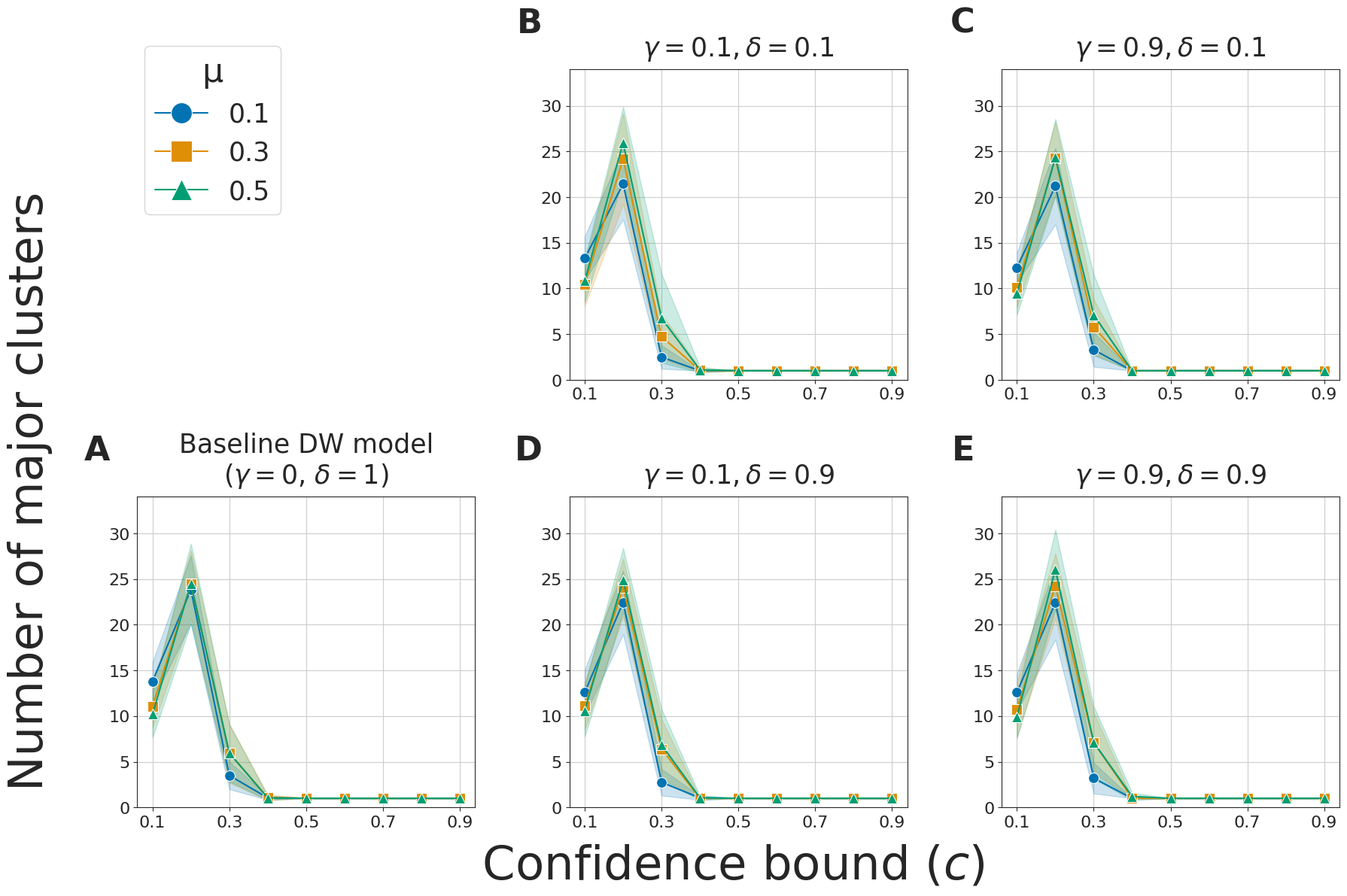}
  \caption{The numbers of major opinion clusters in simulations of our adaptive edge-weighted DW model versus the confidence bound $c$ for different values of the edge-weight increase parameter $\gamma$ and the edge-weight decrease parameter $\delta$ on a 10-regular {cycle-like} graph. The initial edge-weight parameter is $z_0 = 0.1$.
  }
  \label{figure:deg_opinions_2}
\end{figure*}

When $k \in \{50,100,200\}$ and $c = 0.1$, we obtain fewer major opinion clusters in our adaptive edge-weighted DW model for $\mu = 0.1$ than for $\mu \in \{0.3, 0.5\}$ (see Figure \ref{figure:deg_opinions_1}). 
 For $(\mu, c) = (0.1,0.1)$, the number of major opinion clusters tends to decrease as we decrease $\delta$. 
 Although our simulations of the baseline DW model sometimes also have fewer major opinion clusters for $\mu = 0.1$ than for {$\mu \in \{0.3, 0.5\}$}, the drop-off as we decrease $\mu$ is sharper in our adaptive edge-weighted DW model than in the baseline DW model. When $k = 10$ and $c = 0.2$, there are slightly fewer major opinion clusters for $\mu = 0.1$ than for $\mu \in \{0.3, 0.5\}$ {in our adaptive edge-weighted DW model}.
 By contrast, {when $k = 10$ and $c = 0.2$}, we do not observe a noticeable difference in the numbers of major opinion clusters for different values of $\mu$ in the baseline DW model (see Figure \ref{figure:deg_opinions_2}).

\begin{figure*}[htbp]
  \includegraphics[width=\linewidth]{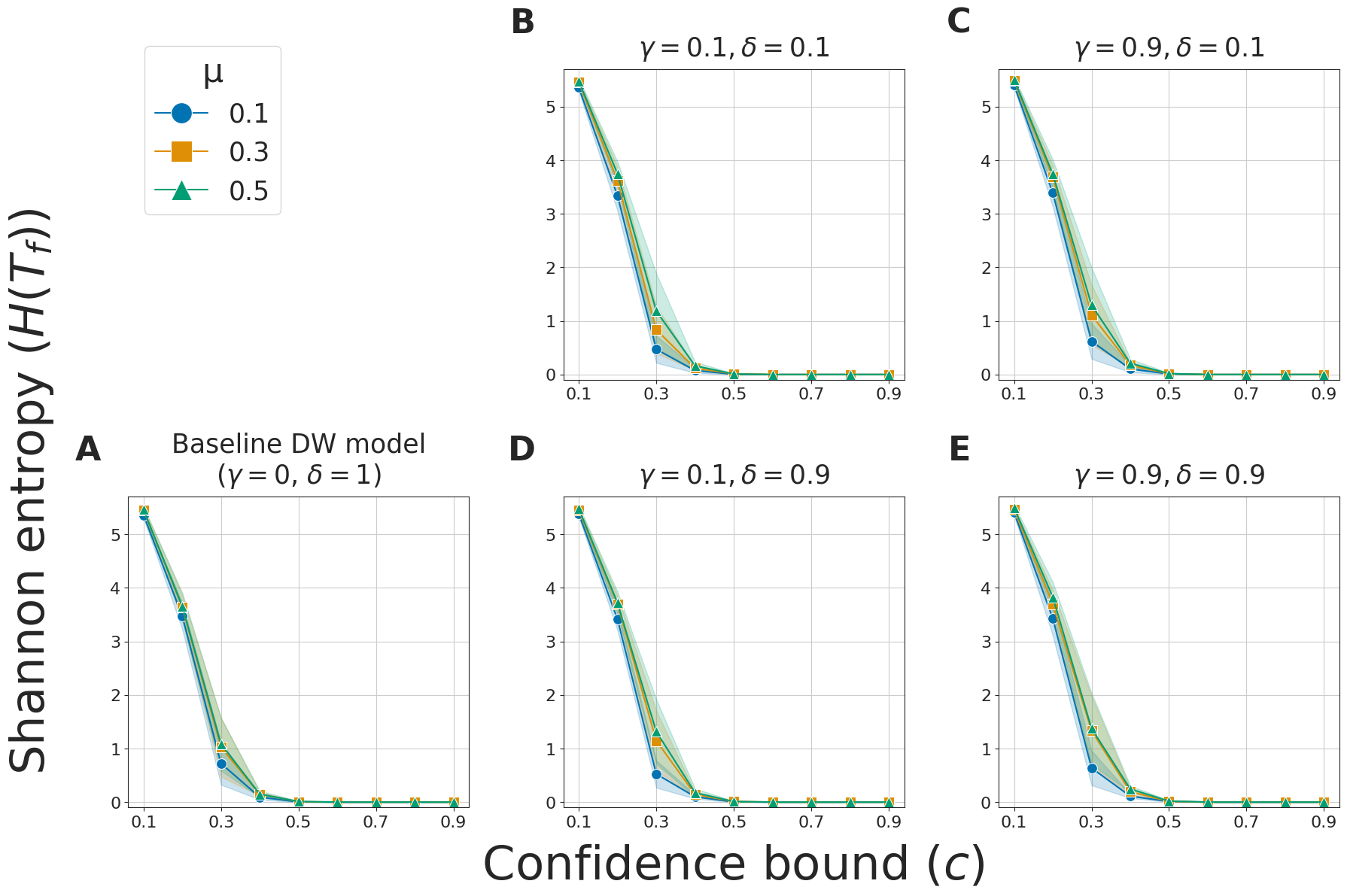}
  \caption{The Shannon entropies {at the convergence time $T_f$} in simulations of our adaptive edge-weighted DW model versus the confidence bound $c$ for different values of the edge-weight increase parameter $\gamma$ and the edge-weight decrease parameter $\delta$ on a 10-regular cycle-like graph. The initial edge-weight parameter is $z_0 = 0.1$.
  }
  \label{figure:deg_entropy}
\end{figure*}

As with the number of major opinion clusters, for small confidence bounds $c$ (specifically, {$c = 0.1$ and $c = 0.2$}), the Shannon entropies in simulations of our adaptive edge-weighted DW model and the baseline DW model both increase as we decrease $k$. Additionally, the Shannon entropy is significantly larger for $k = 10$ than for the other values of $k$. 

{For degrees $k \in \{50,100,200\}$, the Shannon entropy and the number of major opinion clusters have similar trends. 
{However}, when $k = 10$, the Shannon entropy decreases from $c = 0.1$ to $c = 0.2$ (see Figure \ref{figure:deg_entropy}}) and the number of major opinion clusters increases. 
This arises because there are many more minor opinion clusters for $c = 0.1$ than for $c = 0.2$.
The nodes in the $10$-regular cycle-like graph have fewer neighbors than when $k \in \{50,100,200\}$, so each node is receptive to a smaller maximum number of nodes.
Therefore, when there is opinion fragmentation, we expect the final opinion clusters in the $10$-regular cycle-like graph to be smaller than those in $k$-regular cycle graphs with $k\in \{50,100,200\}$.
The small sizes of the opinion clusters results in a large number of minor opinion clusters when $c = 0.1$ and $k = 10$.

\begin{figure*}[htbp]
  \includegraphics[width=\linewidth]{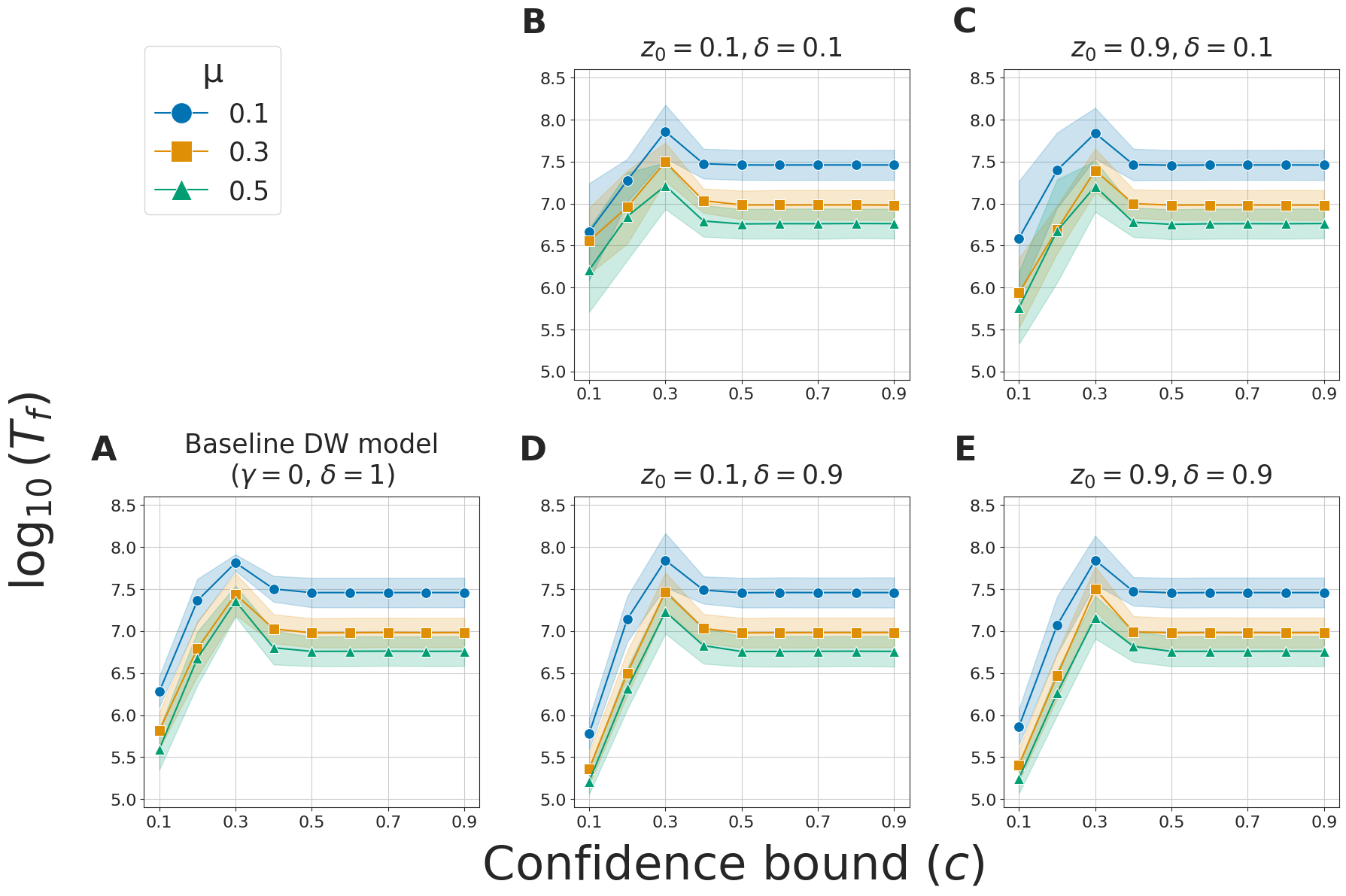}
  \caption{The logarithm of the convergence times in simulations of our adaptive edge-weighted DW model versus the confidence bound $c$ for different values of the initial edge-weight parameter $z_0$ and the edge-weight decrease parameter $\delta$ on a 10-regular cycle-like graph. The edge-weight increase parameter is $\gamma = 0.1$.
  }
  \label{figure:deg_time_2}
\end{figure*}

The convergence times of our simulations of our adaptive edge-weighted DW model on a 10-regular cycle-like graph peak for a different confidence bound $c$ than our simulations on $k$-regular cycle-like graphs with $k \in \{50,100,200\}$.
{The peak is at $c = 0.2$ for the 10-regular {cycle-like} graph, but the peak is at $c = 0.1$ for $k \in \{50,100,200\}$.}
Additionally, for $k \in \{50,100,200\}$, {the trends in the convergence-time properties} of our adaptive edge-weighted DW model tend to align with the convergence-time trends
that we observed in our simulations on complete graphs.
{In particular,} as in complete graphs, our adaptive edge-weighted DW model has smaller convergence times than the baseline DW model for $c \leq 0.2$ {when $k \in \{50,100,200\}$} (see Figure \ref{figure:deg_time_1}).
{{By contrast, the {trends in the convergence-time properties of} our adaptive edge-weighted DW model on a $10$-regular cycle-like graph are} similar to the convergence-time trends for the $G(1000, 0.01)$ ER graphs.}
When $k = 10$ and $\delta = 0.1$, we observe both that the convergence-time variance
is larger in our adaptive edge-weighted DW model than in the baseline DW model for all values of $c$ and that our adaptive edge-weighted DW model takes longer to converge than the baseline DW model when $c = 0.1$. 
For $k = 10$ and $\delta > 0.1$, our adaptive edge-weighted DW model converges faster than the baseline DW model for $c \leq 0.2$ (see Figure \ref{figure:deg_time_2}).

{As we have just discussed, the incorporation of adaptive edge weights has different effects on the convergence-time properties of our simulations on the $10$-regular cycle-like graph than on those of our simulations on $k$-regular cycle-like graphs with $k \in \{50,100,200\}$. We hypothesize that this difference 
arises from the graphs' different mean degrees and hence from their different edge densities.
More generally, for all of the graph types that we consider,
our adaptive edge-weighted DW model consistently has different convergence-time properties for dense graphs than for sparse graphs.}


\subsection{Network of network scientists ({\sc NetScience})} 
\label{sec:netscience}

We also simulate our adaptive edge-weighted DW model on the \textsc{NetScience} network~\cite{Netscience}, which is a coauthorship network between researchers in network science. In this unweighted and undirected graph, the nodes represent network scientists and the edges represent coauthorships between them.

As we increase the confidence bound {from $c = 0.2$ to $c = 0.5$}, we observe similar trends in the number of major opinion clusters and the Shannon entropy in our adaptive edge-weighted DW model for all values of the compromise parameter $\mu$, the initial edge-weight parameter $z_0$, the edge-weight increase parameter $\gamma$, and the edge-weight decrease parameter $\delta$. These trends are broadly similar to those in the baseline DW model.
{Both} our adaptive edge-weighted DW model and the baseline DW model experience sharp declines in both the number of major opinion clusters and the Shannon entropy as we increase the confidence bound {from $c = 0.2$ to $c = 0.5$}. 
{For $c \geq 0.5$, both models reach consensus.}

As we increase the confidence bound {from $c = 0.1$ to $c = 0.2$}, the number of major opinion clusters and the Shannon entropy have different trends from each other in both our adaptive edge-weighted DW model and the baseline DW model for all values of the parameters $\mu$, $z_0$, $\gamma$, and $\delta$.
The number of major opinion clusters increases initially and then decreases as we increase $c$, whereas the Shannon entropy decreases as we increase $c$.
However, this increase in the number of major opinion clusters does not entail a decrease in fragmentation. 
As we discussed in Section \ref{sec:deg-regular}, there is a similar discrepancy between the trends in the number of major opinion clusters and the Shannon entropy in the $10$-regular cycle-like graph.
As in the $10$-regular cycle-like graph, we believe that the discrepancy between the trends in the Shannon entropy and the number of major opinion clusters arises from the presence of many small opinion clusters. 
The \textsc{NetScience} network has many small-degree nodes and a similar edge density to that of the $10$-regular cycle-like graph. 
The low edge density of the \textsc{NetScience} network limits the sizes of the nodes' receptive neighborhoods and leads to smaller opinion cluster sizes when fragmentation occurs. 
Consequently, there are many minor opinion clusters when $c = 0.1$.

Our simulations on the \textsc{NetScience} network also have other notable features. For example, we observe more opinion fragmentation for $(c, z_0) = (0.2, 0.1)$ in our adaptive edge-weighted DW model than in the baseline DW model for all values of $\mu$, $\gamma$, and $\delta$.
For $c = 0.3$, either $\gamma = 0.5$ or $\delta = 0.5$, and all values of the initial edge-weight parameter $z_0$, we observe that the number of major opinion clusters is smaller for $\mu = 0.1$ than for $\mu \in\{0.3,0.5\}$ in our adaptive edge-weighted DW model.
By contrast, for these values of {$(\gamma, \delta, c, z_0)$,} we do not observe a noticeable difference in the number of major opinion clusters for different values of $\mu$ in the baseline DW model.

The convergence time increases as a function of the confidence bound $c$ for $c \leq 0.3$, and the convergence is slowest near $c = 0.3$ for all values of the initial edge-weight parameter $z_0$ for both our adaptive edge-weighted DW model and the baseline DW model. 
We believe that the initial increase in convergence times with the confidence bound in the \textsc{NetScience} network (as we also observed for the 10-regular cycle-like graph in Section \ref{sec:deg-regular}) likely arises both from its sparsity and from the presence of many small-degree nodes.

For $\delta = 0.1$ and $c \leq 0.6$, our adaptive edge-weighted DW model has a larger convergence-time variance than the baseline DW model for all values of the parameters $\mu$, $z_0$, and $\gamma$.
We hypothesize that this trend in the convergence-time variance arises from the \textsc{NetScience} network's sparsity and its abundance of small-degree nodes.
We {observed} similar convergence-time trends in the $G(1000, 0.01)$ ER graphs (see Section \ref{sec:ER}) and the $10$-regular cycle-like graph (see Section \ref{sec:deg-regular}), which have similar edge densities. The $G(1000, 0.01)$ ER graphs have an expected edge density of $0.01$, and the $10$-regular cycle-like graph has an edge density of $10/999 \approx 0.01$.
See Section \ref{sec:deg-regular} for a more detailed explanation of our hypothesis.


\subsection{Facebook friendship networks} \label{sec:facebook}

We also simulate our adaptive edge-weighted DW model on the largest connected components of four {graphs} from the \textsc{Facebook100} data set \cite{Facebook100} (see Table \ref{table:Table1}). These graphs are single-university networks of Facebook friendships between individuals at {California Institute of Technology (Caltech)}, Reed College, Swarthmore College, and Smith College on one day in fall 2005.
{The Caltech network has 762 nodes, the Reed College network has 962 nodes, the Swarthmore College network has 1657 nodes, and the Smith College network has 2970 nodes.}

\begin{figure*}[htbp]
  \includegraphics[width=\linewidth]{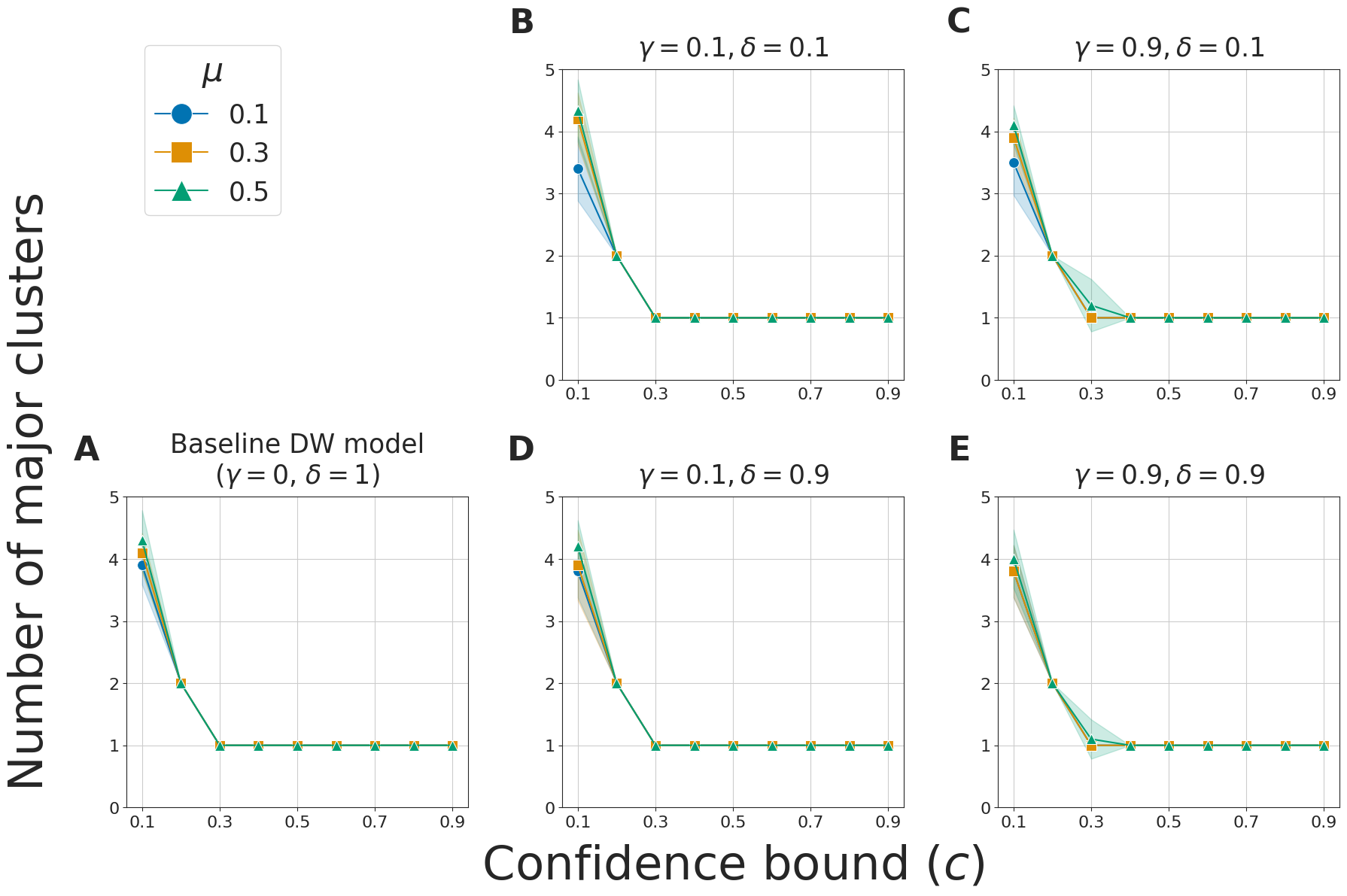}
  \caption{The number of major opinion clusters in simulations of our adaptive edge-weighted DW model versus the confidence bound $c$ for different values of the edge-weight increase parameter $\gamma$ and the edge-weight decrease parameter $\delta$ on the Smith College network. The initial edge-weight parameter is $z_0 = 0.1$.
  }
  \label{figure:smith_opinions}
\end{figure*}

For all four \textsc{Facebook100} networks, our adaptive edge-weighted DW model has similar general trends to those of the complete graphs in both the number of major opinion clusters and the Shannon entropy.
For confidence bounds $c \in \{0.1,0.2,0.3,0.4,0.5\}$ and all values of the initial edge-weight parameter $z_0$, the edge-weight increase parameter $\gamma$, and the edge-weight decrease parameter $\delta$, the Shannon entropy is noticeably smaller when the compromise parameter is $\mu = 0.1$ than when $\mu = 0.3$ or $\mu = 0.5$.
By contrast, the number of major opinion clusters is smaller for $\mu = 0.1$ than for $\mu = 0.3$ or $\mu = 0.5$ for a smaller set of confidence bounds $c$. 
This set of confidence bounds is different for different \textsc{Facebook100} networks and for different values of $\gamma$ and $\delta$.
Additionally, for $c \in \{0.1, 0.3\}$, the number of major opinion clusters has the largest variance for the Caltech network, the Reed College network, and the Swarthmore College network for both our adaptive edge-weighted DW model and the baseline DW model.
The confidence bound $c$ that yields the largest variance in the number of major opinion clusters is not always the same for the Smith College network.
For the baseline DW model, we observe the largest variance in the number of major opinion clusters when $c = 0.1$ for all values of {$\mu$, $z_0$, $\gamma$, and $\delta$}.
For our adaptive edge-weighted DW model, the largest variance sometimes occurs when $c = 0.1$, sometimes occurs when $c = 0.3$, and sometimes occurs when $c \in \{0.1, 0.3\}$.
{We show the number of major opinion clusters for the Smith College network in Figure \ref{figure:smith_opinions}. Analogous figures for the other \textsc{Facebook100} networks are available in our repository \cite{github}.}

As we increase the confidence bound $c$, the convergence time of the baseline DW model decreases slightly for $c \leq 0.3$ and changes little for $c \geq 0.3$. 
By contrast, in our adaptive edge-weighted DW model, when {the edge-weight decrease parameter} $\delta > 0.1$, the convergence time tends to increase as we increase $c$ {for $c \leq 0.3$}.
In our simulations on the \textsc{Facebook100} networks, the differences in the convergence-time trends between our adaptive edge-weighted DW model and the baseline DW model are similar to those in our simulations on the $G(1000, 0.01)$ ER graphs, the $10$-regular cycle-like graph, and the \textsc{NetScience} network.
We hypothesize that these differences in trends between our adaptive edge-weighted DW model and the baseline DW model are influenced by the sparse nature of the \textsc{Facebook100} networks. All of the \textsc{Facebook100} networks have edge densities between $0.02$--$0.06$.


\section{Conclusions and discussion} \label{sec:conclusions}

We formulated and studied a bounded-confidence model (BCM) of opinion dynamics in which the interaction frequencies between pairs of agents depend on their previous interactions.
Our model, which is a generalization of the Deffuant--Weisbuch (DW) BCM, has adaptive edge weights that govern the interaction probabilities between agents. 
One can view these edge weights as encoding the strengths of the ties between agents. 
By incorporating adaptivity into edge weights, we investigated how prior interactions between agents help shape how often pairs of agents interact with each other in the future. 

We proved results for the long-time dynamics of the edge weights and the limiting behavior of effective graphs, which consist of the edges of a graph that encode social ties between mutually receptive agents, in our adaptive edge-weighted DW model. 
We showed that the edge weights almost surely converge to either 0 or 1, which respectively entail the smallest and largest interaction probabilities between agents.
Moreover, if each edge weight converges either to 0 or to 1, we proved that the edge weight of an edge between two agents almost surely converges to 1 if they have the same limit opinion and almost surely converges to 0 if they have different limit opinions (see Theorem \ref{theorem:edge_convergence}). 
 When the limit opinions of {adjacent} agents never differ by precisely the confidence bound, we also proved that the effective graphs in our adaptive edge-weighted DW model are guaranteed to converge (see Theorem \ref{thm:effgraph}).

Guided by our analysis of effective graphs, we also conducted numerical simulations of our adaptive edge-weighted DW model on several types of synthetic graphs and {real-world networks.} 
In our simulations, we examined the numbers of major opinion clusters, the Shannon entropies, and the convergence times for each of these {graphs}. 
We observed that the differences, especially in the convergence times, in qualitative dynamics between our adaptive edge-weighted DW model and the baseline DW model depend on the type of {graph}.

It is worthwhile to systematically investigate why incorporating adaptive edge weights affects the DW dynamics differently for different types of graphs. We also conducted {a small set of} simulations (which we did not present) of our adaptive edge-weighted DW model on two-community stochastic-block-model (SBM) graphs~\cite{newman2018}, and we obtained similar behavior as in our simulations on complete graphs. 
Therefore, the observed differences in dynamics do not appear to be due to community structure. Instead, given the observed differences in our adaptive edge-weighted DW model's dynamics between 10-regular cycle-like graphs and $k$-regular cycle-like graphs with larger $k$, we hypothesize that these differences may be due in part to the difference in the mean degrees of the graphs. However, it is important to conduct further simulations on a variety of {graphs} to examine this hypothesis.

There are also other interesting phenomena that one can explore in our adaptive edge-weighted DW model. Our simulations sometimes yielded different dynamics for different values of the initial edge weights, which we assumed are the same for all agents. It is worthwhile to conduct simulations with a larger set of initial edge weights to better understand how they affect our model's qualitative dynamics. 
Moreover, in line with recent research on node-weighted BCMs \cite{grace2022}, it is worthwhile to consider the effects of {heterogeneous initial edge weights (and hence heterogeneous initial interaction probabilities).}
Additionally, one can examine adaptive node weights, which one can interpret as changes in agents' sociabilities or activities level after their interactions. It is worthwhile to compare the behavior of our adaptive edge-weighted DW model with adaptive node-weighted DW models.

Opinion models with adaptive edge weights may be useful in studies of the effects of social-media algorithms on opinion dynamics and social influence. Researchers have used BCMs to examine algorithmic bias \cite{alg_bias} and the impact of media accounts on opinion dynamics on social-media platforms \cite{media_nodes}. Some social-media algorithms use prior interactions of users with content to increase the probability that they see similar material in the future \cite{tiktok, instagram}. 
However, unlike in our adaptive edge-weighted DW model, many of these algorithms do not distinguish between ``positive" interactions (i.e., interactions with content to which a user is receptive) and ``negative" interactions (i.e., interactions with content to which a user is not receptive) when they increase the probability that a user sees similar content in the future {(see, e.g., \cite{milli2025})}. 
Future research efforts can explore the effect of algorithmic content recommendations on opinion dynamics on social-media platforms by using adaptive edge weights with various {opinion update rules}.
\vspace{2mm}



\begin{acknowledgements}
\vspace{-2mm}

{JL acknowledges funding from the National Science Foundation (grant numbers 1829071 and 2136090). MAP and JL acknowledge funding from the National Science Foundation (grant number 1922952) through the Algorithms for Threat Detection (ATD) program.}

\end{acknowledgements}


\vspace{-3mm}
\section*{Data availability}
\vspace{-2mm}

The data from our simulations, which is the only data in the present paper, is not available. The code 
to generate all of our simulations is available at Ref.~\cite{github}. This repository also includes plots from additional numerical simulations.




%


\end{document}